\DeclareMathOperator{\arctanh}{arctanh}
\theoremstyle{plain}
\newtheorem{thm}{Theorem}
\newtheorem{prop}[thm]{Proposition}
\begin{document}

\title{Response in violation of Bell inequality to imperfect photon addition and subtraction \\ in noisy squeezed states of light}

\author{Saptarshi Roy\(^1\), Titas Chanda\(^1\), Tamoghna Das\(^{1,2}\), Aditi Sen(De)\(^1\), Ujjwal Sen\(^1\)}

\affiliation{\(^1\)Harish-Chandra Research Institute, HBNI, Chhatnag Road, Jhunsi, Allahabad 211 019, India}
\affiliation{\(^2\)Institute of Informatics, National Quantum Information Centre, Faculty of Mathematics, Physics and Informatics, University of Gda\'{n}sk, 80-308 Gda\'{n}sk, Poland.}

\begin{abstract}

% We study the violation of Bell inequality for photon added and subtracted TMSV states. For addition and subtraction of photons in a single mode we observed an overall enhancement in the maximal violation. However, an interplay between monotonicity and  non-monotonicity in Bell violation depending on the squeezing strength was also noticed.
%% When number of photons added or subtracted in each mode is even, this non-monotonicity gets washed away on increasing the squeezing parameter. 
% Furthermore, when we have a given number of photons to be added or subtracted from each mode, we find distributing the photons between the two modes, for sufficiently high squeezing or photons added/subtracted,  leads to monotonic enhancement of the maximal Bell inequality violation. The response of maximal Bell inequality violation for TMSV states and photon added and subtracted TMSV states in presence of noise have been studied. We then consider the case of a faulty twin beam generator and inconclusive photon addition and subtraction procedure and study the robustness of Bell violation in such lossy scenarios. Finally, we reanalyze the situation when the photon addition or subtraction mechanism is itself faulty.

Violation of Bell inequality is a prominent detection method for quantum correlations present in composite quantum systems, both in finite and infinite dimensions. We investigate the consequence on the violation of local realism based on pseduospin operators when  photons are added or subtracted in a single mode or in both the modes of the two-mode squeezed states of light in presence of noise. In the noiseless situation,  we show that for addition (subtraction) of photons in a single mode, there is an overall enhancement in the maximal violation, although we observe an interplay between monotonicity and non-monotonicity in the violation of Bell inequality depending on the squeezing strength. Moreover, we report that for low squeezing or low number of photons added or subtracted, subtraction in both the modes can lead to higher violation of local realism than that in the case of addition. For any choice of parameters, such ordering is not seen if one compares their entanglement contents. In the event of a faulty twin-beam generator, we obtain a lower-than-expected squeezing in the state. In such a case, or in imperfect photon addition (subtraction), or under local noise, we find that the violation of local realism by the noise-affected
two-mode squeezed states always decreases. Interestingly however, we notice that photon addition (subtraction) can in general help to conquer the ill-effects of noise by enhancing the violation of local realism or by transforming non-violating states to violating ones, thereby acting as an activating agent.
\end{abstract}

\maketitle

\section{Introduction}
%general comments, introduction to bell inequality
Entangled quantum states \cite{HHHH_RMP} shared between multiple and  distant partners  have the potential of revolutionizing communication and computation schemes \cite{crypto, secret_sharing,  dense_coding, teleport, metro, one_way, sende10,reviewcomm}. 
Historically, the existence of quantum entanglement  was first pointed out in the seminal paper of Einstein-Podolsky-Rosen (EPR) \cite{EPR1935}, which questioned whether the theory of quantum mechanics to be ``incomplete'', based on the assumptions of ``locality'' and ``reality''. John S. Bell formulated a mathematical inequality to be satisfied by any physical theory that is local and realistic, and  which can be violated by entangled quantum states \cite{Bell1964}.   With the development of quantum information science, violation of Bell inequality turned out to be an experimental-friendly detection criterion for  entangled states. Apart from its fundamental importance, violation of Bell inequality has been proven to be
the crucial ingredient in certain proofs of security of quantum cryptography \cite{crypto, secret_sharing}. 
 
Quantum information protocols like entanglement-based quantum key distribution \cite{crypto}, quantum dense coding \cite{dense_coding}, and quantum teleportation \cite{teleport}  were originally proposed for discrete variable systems, and have been implemented, e.g.,  by using the polarization degree of freedom of photons \cite{Pan10photon}.
However, the success probability of preparing  entangled states in this way is very low, and at the same time,  Bell-basis measurement, if required for certain processes is not possible by linear optical elements \cite{Bellbasislinearopt}, thereby making the overall success probability of protocols by using photonic qubits even lower.  It turns out that continuous variable (CV) systems  \cite{CVreview, pati_book} can  overcome certain difficulties, like Bell-basis detection, and hence implementing quantum information processing tasks by using CV states in infinite dimensional systems can be important.
%, especially from an experimental point of view. 
Specifically, they  can be prepared with almost unit probability by using nonlinear interaction of a crystal with laser, and can have only imperfections due to the varying intensity of laser light, resulting in a low squeezing parameter \cite{CVreview}. Therefore, studying quantumness of such CV systems plays a significant role in quantum information science and is the main goal of this article.

Gaussian states, having positive Wigner functions \cite{Wigner},  are one of the most prominent examples of CV states,  advantageous for quantum communication and computation schemes \cite{gaussian_exp_th}.   Although the performance of these states clearly show their nonclassical nature, Bell argued \cite{Bell_epr_epw,epr_epw} that states with positive Wigner functions are naturally endowed with a hidden variable theory, and hence would not violate a Bell inequality. Later, Banaszek and W\'{o}dkiewicz \cite{Banaszek} pointed out that the positivity or negativity of Wigner function has a weak connection to violation of local realism,  and managed to 
 construct a Bell expression out of parity-based operators to obtain violations for two-mode squeezed vacuum (Gaussian) states with positive Wigner functions. However, their technique had intrinsic optimization
  problems \cite{banaszek_fail}, and so even the EPR state,
having maximal quantum correlation,  
   do not violate the inequality maximally. 
In Ref. \cite{Chen},   
   an alternate approach was proposed using pseudospin operators (that are closely related to the parity operators), which is free from such optimization difficulties, and can give the maximal violation in the case of EPR state (for a nice survey see Ref. \cite{nonlocality_positive_wigner}). The pseudospin operators were later generalized \cite{gen_pseudospin, gen_pseudospin2} to
   calibrate the violation of local realism
   for other types of quantum correlated states of continuous variables. Moreover, the violation of Bell inequality for squeezed vacuum states have been tested experimentally using parity-type operators \cite{exp:parity_spatial}, which further motivates the study in this direction.

%As a way-out, based on parity \cite{Banaszek} or pseudo-spin operators\cite{Chen}, new type of Bell inequalities have been formulated which are shown to be violated by two-mode Gaussian states. Moreover, it was shown that the famous EPR state with infinite squeezing parameter violates this inequality maximally. \textcolor{red}{The story of Bell inequalities for cv systems can be summerised here in the following lines. 
%1. Bell's comment that Gaussian states cannot violate Bell inequality since a HVT via the positive Wigner fn. seems to be natural. 2. After several attempts, it was Banaszek  who gave an Bell expression constructed out of parity operators which exhibited BV for Gaussian (TMSV) states. However, it was difficult to obtain optimal settings for these measurements and even the EPR state did not saturate the Tsirelson's bound. 
%3. This problem was solved by chen et. al by introducing psedo spin operators where the maximisation could be computed relatively simply and in this setting the EPR state saturates the Tsirelson's bound.
%4. Intuitive understanding of the situation; [refs]; why positive Wigner fn violates Bell inequality [ref]. }

%introduction to non-gaussianity
On the other hand,  there exist several quantum information protocols, like entanglement distillation and quantum error correction \cite{distillable, err_corr_g}, which cannot be performed by Gaussian states with Gaussian operations \cite{Ent_dist_ng, err_corr_ng}. Therefore, over the last few years,  active research has been carried out to investigate properties of non-Gaussian states. One of the simple methods to generate such states is to add or subtract photons \cite{cerf,add_sub_others_ent} to or from the Gaussian states. These processes have also been demonstrated experimentally \cite{add_sub_exp,add_sub_exp2}. Moreover, it was shown that  entanglement content  of the two-mode photon-added (-subtracted) state is much higher than the corresponding two-mode squeezed vacuum states (TMSV) \cite{cerf}, thereby showing enhancement of entanglement due to photon addition or subtraction. 
Moreover, ``degaussification'' via photon addition and subtraction has also been proven to be useful in a variety of situations, like engineering of quantum states to attain hybrid entanglement \cite{hybrid} and for tackling boson sampling problems \cite{boson}.

%\textcolor{red}{In finite dimension, several examples show that the amount of  violation of Bell inequality and entanglement does not have one-to-one correspondence. However, for two-qubit pure states, such connection between   entanglement quantified by the von Neumann entropy of local density matrices and the correlation function Bell inequality exists. }

%what we did in this work

In this article, we investigate the violations of Bell inequality for photon-added and -subtracted two-mode squeezed vacuum states, both in noiseless and noisy scenarios, where violations of local realism are tested using the pseudospin operators. 
Before considering the imperfection,
we first present the results in the case of single-mode operations without noise, specifically photon addition (subtraction) from a single mode of TMSV, and apart from some instances of diminution, we report an overall enhancement in the maximal violation of Bell inequality with added (subtracted) number of photons. 
However, we report some interesting nonmonotonic features, when odd or even number of photons are added (subtracted) to a single mode.
The response of maximal violation of Bell inequality is also examined, when a given number of photons to be added or subtracted is distributed between the two modes. In particular, in a distributed scenario, we find that unlike entanglement, for a certain squeezing and a small number of added or subtracted photons, subtraction is better than addition according to their quantumness in terms of violation of local realism with pseudospin operators. Moreover, we compare the effect of distribution to single-mode operations, and observe that for sufficiently high squeezing or number of added (subtracted) photons, the maximal violation for distributed operations displays a monotonic enhancement compared to that in case of single-mode operations.
%\textcolor{red}{Here write about the generic nature of noise and put lots of references.}

 An important aspect, which turns out to be crucial experimentally, is the 
role of  the inevitable noise that creeps in the TMSV states \cite{noise1} during preparation, transmission, and protocol implementation. 
We investigate the effects of noise on the violation of Bell inequality in two prototypical realistic scenarios -- the states are affected by noise, or when the state generator is itself faulty, i.e., when instead of a TMSV state with certain squeezing, it prepares a state with lower squeezing.  
As expected, noise reduces the amount of quantum correlations present in these states, and hence the amount of violation of local realism. Interestingly, however, we show that the process of photon addition (subtraction)
 can enhance, and in some cases \emph{activate} the violations. 
Specifically, we find that photon addition can transform certain non-violating states to be Bell inequality violating, which we call as the activation of violation of Bell inequality, in a similar spirit that the word ``activation'' was used in the literature for different processes \cite{activation}. In realistic scenarios, even the addition (subtraction) schemes of photons can be faulty due to mechanisms like dark counts \cite{noise_dark,nd2} of the photodetectors. We also study the reaction of the pseudospin operator-based Bell inequality in presence of both noisy and faulty scenarios, and show that activation of violation due to the process of addition (subtraction) of photons is also possible even in presence of two types of noise.
% the noisy-duo setting.

 The paper is organized as follows. In Sec. \ref{sec:formalism}, we discuss about the two-mode squeezed vacuum states, and the effect of photon addition or subtraction on it. The use of pseudospin operators for analyzing violations of Bell inequality is also explored here. In Sec. \ref{sec:single_mode_operations}, the case of single-mode operations is examined. Following it, in Sec. \ref{sec:distribution}, we discuss how distributed operations effect the maximal violation. Furthermore, in Sec. \ref{sec:add_ineq}, we show that in contrary to single-mode operations, in the realm of distributed operations, photon addition is inequivalent to photon subtraction. In Sec. \ref{sec:realistic}, violations of Bell inequality are examined in more realistic scenarios, namely, in the presence of noise in Sec. \ref{sec:noise} and when the squeezed state generator is faulty in Sec. \ref{sec:fault}. Finally, in Sec. \ref{sec:dark}, we deal with the scenario of dark counts in the photon addition and subtraction mechanism, making them erroneous. In Sec. \ref{sec:conclu} contains a conclusion. An appendix provides the proof of the maximization of the Bell expression.

% In the noiseless scenario, we first discuss the case with single mode operations in Sec. \ref{sec:single_mode_operations}. The response of maximal violation of Bell inequality upon distributing the added or subtracted photons in both the modes is given in Sec. \ref{sec:distribution}. Then violations are examined in a more realistic scenario in Sec. \ref{sec:realistic}, where the states are affected by noise  (Sec. \ref{sec:noise}), and when the state generator is in itself faulty (Sec. \ref{sec:fault}), and show how photon addition or subtraction can enhance (and in some cases \emph{activate}) violations of Bell inequality. Finally, the noisy and faulty scenarios are reanalyzed in Sec. \ref{sec:dark}, when the photon addition and subtraction process itself is imperfect.

%address such question for CV systems with two-modes, both in noiseless and noisy scenarios. 

%In finite dimension, several examples show that the amount of  violation of Bell inequality and entanglement does not have one-to-one correspondence. However, for two-qubit pure states, such connection between   entanglement quantified by the von Neumann entropy of local density matrices and the correlation function Bell inequality exists \textcolor{red}{TRUE????????????}. 
%

%
%
%\textcolor{blue}{Error coreection can be performed with linear optics \cite{err_corr_lo}.}

\section{Formalism}
\label{sec:formalism}
%\textcolor{red}{Plan $\rightarrow$ 1. CV systems, TMSV state
%2. photon added/sub TMSV state
%3. Bell operators: usual and generalised pseudospin operators.
%4. correlation function and Bell expression}

Study of Gaussian states lies at the heart of investigations with CV systems. In the state space of Gaussian states, the most general pure states are the displaced squeezed states \cite{CVreview, pati_book}. Since we are interested to study the quantum correlations of quantum states in CV system, and we know that the displacement operator does not alter the  nonlocal properties of a state, in this article without loss of generality, we consider the (undisplaced) two-mode squeezed vacuum (TMSV) state for our investigations. For squeezing strength $r$, the TMSV state can be represented as
\begin{eqnarray}
|\psi_r\rangle = \sum_{n=0}^\infty c_n |n,n\rangle,   
\label{eq:tmsv}
\end{eqnarray}
where $c_n=(1-x)^{\frac{1}{2}} x^{\frac{n}{2}}$ with $x=\tanh^2 r$, and $\lbrace |n\rangle \rbrace$ is the Fock basis consisting of the photon number states. The TMSV state, in the limit of infinite squeezing ($r \rightarrow \infty$), reduces to the well known EPR state.

We can degaussify the TMSV state by simply adding (subtracting) photons locally in its two modes. It was shown  that this degaussification process (photon addition or subtraction) leads to monotonic enhancement of entanglement \cite{cerf}. In this paper, we analyze the effects of photon addition or subtraction on the violation of Bell inequality.
The normalized state after adding $k$ photons to the first mode and $l$ photons to the second mode of the TMSV state reads as
\begin{eqnarray}
|\psi_r^{(k,l)}\rangle = \sum_{n=0}^\infty c_n^{(k,l)}|n+k,n+l\rangle,  
\label{eq:added_state}
\end{eqnarray}
where
\begin{eqnarray}
 c_n^{(k,l)} = \frac{x^{\frac{n}{2}}}{\sqrt{{_2}F_1(k+1,l+1,1,x)}} \sqrt{\binom{n+k}{k}\binom{n+l}{l}}.\nonumber \\
 \label{eq:patmsv}
\end{eqnarray}
 Here, ${_2}F_1$ is the Gauss Hypergeometric function. Note that $c_n^{(0,0)} = c_n$ in Eq. \eqref{eq:tmsv}. On the other hand, the normalized state after subtracting $k$ and $l$ photons from first and second mode respectively is given by
\begin{eqnarray}
|\psi_r^{(-k,-l)}\rangle = \sum_{n=k}^\infty c_n^{(-k,-l)}|n-k,n-l\rangle  
\label{eq:sub_state}
\end{eqnarray}
where
\begin{eqnarray}
 c_n^{(-k,-l)} = \frac{x^{\frac{n-k}{2}}}{\sqrt{{_2}F_1(k+1,k+1,1+k-l,x)}} \sqrt{\frac{\binom{n}{k}\binom{n}{l}}{\binom{k}{l}}}.\nonumber \\
 \label{eq:pstmsv}
\end{eqnarray}
 Without any loss of generality, in this paper, we assume that $k \geq l$. 
 If we restrict operations to a single mode, say, the first mode, the coefficients involved in $|\psi_r^{(\pm k,0)}\rangle$ simplifies as
\begin{eqnarray}
c_n^{(k,0)} = x^{\frac{n}{2}}(1&-&x)^{\frac{1+k}{2}} \sqrt{\binom{n+k}{k}},  
\end{eqnarray}
and
\begin{eqnarray}
  c_n^{(-k,0)} &=& x^{\frac{n-k}{2}}(1-x)^{\frac{1+k}{2}}\sqrt{\binom{n}{k}}.
\label{eq:singlemode_cn's}
\end{eqnarray}

%\textcolor{blue}{Comment on the problems to obtain Bell violation for TMSV state and mention Chen's result. Talk about the pseudospin operators and its generalisations [chen, PLA-2011]. State the Bell expression and the maximisations involved.}

We consider  Bell inequalities  by using the following  pseudospin operators \cite{gen_pseudospin}, given by
\begin{eqnarray}
S^z_q &=& \sum_{\substack{n=0 \\ 2n +q \geq 0}}^\infty |2n +q+1\rangle\langle 2n + q +1| - |2n +q\rangle\langle 2n + q |, \nonumber \\
S^-_q &=& \sum_{\substack{n=0 \\ 2n +q \geq 0}}^\infty |2n +q\rangle\langle 2n + q +1| = (S^+_q)^\dagger,
\label{eq:gen_pseudo_spin}
\end{eqnarray}
where $q$ is an integer. 
The correlation functions for an arbitrary state $\rho$, in terms of the pseudospin operators are given by
\begin{eqnarray}
E(\theta_a,\theta_b) = \text{Tr}[\rho \textbf{S}_{q_1}^{\theta_a}\otimes \textbf{S}_{q_2}^{\theta_b} ], 
\label{eq:correlation_fn}
	\end{eqnarray}
where $\text{S}_{q_i}^{\theta_j} = \cos \theta_j S^z_{q_i} + \sin \theta_j (S^-_{q_i}+S^+_{q_i})$, $j= a, b$,
with $\theta_j$s being the settings of the measurements performed by both the parties, viz. $a$ and $b$.
Like the Clauser-Horne-Shimony-Holt (CHSH) version \cite{chsh_paper} of Bell inequality (Bell-CHSH inequality) in finite dimension, the Bell-CHSH expression in this case based on the  correlation functions, $E(\theta_a,\theta_b)$, in  Eq. \eqref{eq:correlation_fn}, also reads as
\begin{eqnarray}
\chi^{q_1,q_2}_{\theta_a,\theta_b,\theta_a',\theta_b'} = E(\theta_a,\theta_b)+E(\theta_a,\theta_b')+E(\theta_a',\theta_b)-E(\theta_a',\theta_b'). \nonumber \\
\label{eq:bell_expression_gen}
\end{eqnarray}
 Our task is to maximize $\chi^{q_1,q_2}_{\theta_a,\theta_b,\theta_a',\theta_b'}$ (which we refer to as $\chi$ without subscripts and superscripts) with respect to the settings specified by $\theta_a,\theta_b,\theta_a',\theta_b'$, and the pair ($q_1,q_2$). Note that, in the correlation function, constructed out of the pseudospin measurements (see Eq. \eqref{eq:correlation_fn}), we neglect any phase factors since they do not provide any additional information in the maximization of the Bell expression for the states considered here. Therefore, 
we are finally interested to study the properties of a physical quantity, given by 
 % \begin{eqnarray}
%\chi^{max} &=& \max_{\theta_a,\theta_b,\theta_a',\theta_b'} \lbrace  \textbf{S}_{0} \otimes \textbf{S}_{0}, \textbf{S}_{0} \otimes \textbf{S}_{1} , \textbf{S}_{1} \otimes \textbf{S}_{0}, \textbf{S}_{1}\otimes \textbf{S}_{1} \rbrace,\nonumber \\
%&=& \max_{\theta_a,\theta_b,\theta_a',\theta_b',q_1,q_2} \lbrace  \textbf{S}_{q_1} \otimes \textbf{S}_{q_2}\rbrace.
%\label{eq:max_gen_pseudospin}
%\end{eqnarray}
\begin{eqnarray}
\chi^{max} &=&  \max_{\theta_a,\theta_b,\theta_a',\theta_b',q_1,q_2} \chi^{q_1,q_2}_{\theta_a,\theta_b,\theta_a',\theta_b'} .
\label{eq:max_gen_pseudospin}
\end{eqnarray}
It turns out that the optimisation over the $q$-values can be performed easily by looking at the structure of the concerned state.
Settling with the values of $(q_1,q_2)$, we are left with the optimization over the measurement settings $\lbrace \theta_a,\theta_b,\theta_a',\theta_b' \rbrace$.
 The  correlation function in Eq. \eqref{eq:correlation_fn}, for the states considered in this article, typically, is of the form
\begin{eqnarray}
E(\theta_a,\theta_b) = \pm \cos \theta_a \cos \theta_b + \mathcal{K}\sin \theta_a \sin \theta_b,
\label{eq:correlation_fn_structure}
\end{eqnarray}
 where $ 0 \leq \mathcal{K} \leq 1$. For a TMSV state with squeezing parameter $r$, $\mathcal{K} = \tanh 2r$, and depending on the number of photons added or subtracted to the TMSV state, the $\mathcal{K}_{(\pm k, \pm l)}$ changes accordingly. Here, the subscript of $\mathcal{K}$ denotes the number of photons added or subtracted from each mode of the TMSV state. The optimal measurement settings, which maximizes the violation of pseudospin based Bell inequality considered in Eq. \eqref{eq:bell_expression_gen}, is given by
\begin{eqnarray}
\theta_a = 0,\theta_{a'} = \pi/2,\theta_b = -\theta_{b'} = -\theta,
\end{eqnarray}
where
\begin{eqnarray}
\cos \theta = \frac{1}{\sqrt{1 + \mathcal{K}^2}} \text{ and } \sin \theta = \frac{\mathcal{K}}{\sqrt{1 + \mathcal{K}^2}},
\label{eq:optimal_setting}
\end{eqnarray}
such that the maximal Bell-CHSH quantity, $\chi^{max}$, reduces to
\begin{eqnarray}
\chi^{max} &=& 2\big( \cos \theta + \mathcal{K} \sin \theta \big) = 2\sqrt{1+\mathcal{K}^2}.
\label{eq:max_bv}
\end{eqnarray}
The details of the optimization procedure is given in Appendix \ref{appendix:A} \cite{horo_appendix}. 
 
%\textcolor{blue}{Maximisation via horodecki method to be put here. Discussion of why, in this case,  it provides only a sufficient condition.}

%\textcolor{red}{As Bell violation is in itself at the expectation value level, we would like to suggest that, experimentally also, the easiest way to calculate it, is to take a blind maximum out of the 4 possible pseudospin measurement options. We will elaborate this point later in the manuscript where we consider a faulty photon addition/subtraction mechanism. In this manuscript, we fix the Bell operator (generalised pseudospin operator) and study the change in Bell violation when we add or subtract photons to the TMSV state. Comment that the invariance of the structure of Bell violation and consequently the maximisation remains the same.}
%\textcolor{red}{In this manuscript, we consider Bell inequalities using generalised pseudospin operators and study the change in violation of such inequalities when we add or subtract photons to the TMSV state. We report an `overall' enhancement of the maximal violation from the TMSV value via photon addition and subtraction procedures.}
The amount of enhancement in maximal violation of Bell inequality in the photon addition/subtraction process can be quantified as
\begin{eqnarray}
\mathcal{G}=\frac{\chi^{max}(|\phi\rangle)-\chi^{max}(|\eta\rangle)}{\chi^{max}(|\eta\rangle)}.
\label{eq:gain}
\end{eqnarray}
 Here, we compute the enhancement in the maximal violation of local realism for $|\phi\rangle$ with respect to a given state $|\eta\rangle$. Typically, $\ket{\eta}$ is the TMSV state, while $\ket{\phi}$ is the same TMSV state after adding (subtracting) photons.
 Using the above techniques, we first set out to investigate the effects of maximal violation of Bell inequalities due to single mode operations.

\section{Single mode operations}
\label{sec:single_mode_operations}
Let us first concentrate on the response of maximal violation of Bell inequality of TMSV states subject to addition/subtraction of photons in a single mode.  We first note that, for single mode operations, photon addition in one mode is equivalent to the photon subtraction from the other mode, as it can be easily shown that $|\psi_r^{(k,0)}\rangle = |\psi_r^{(0,-k)}\rangle$ by using Eqs.   
\eqref{eq:added_state}-\eqref{eq:pstmsv}. 
Moreover, since $\chi^{max}(|\psi_r^{(k,0)}\rangle) = \chi^{max}(|\psi_r^{(0,k)}\rangle)$, we can easily see $\chi^{max}(|\psi_r^{(k,0)}\rangle) = \chi^{max}(|\psi_r^{(-k,0)}\rangle)$. 
Therefore, without any loss of generality, we only consider  addition of photons in a single mode, say the first mode, of the TMSV state in this section. However, from an experimental point of view, subtraction is easier to realize than addition \cite{add_sub_exp}, since the latter process essentially requires an additional photon pumping apparatus. So, even if both the processes are equivalent in terms of the maximal violation, experimentally, subtraction is preferred. Furthermore, in situations where addition and subtraction yields inequivalent maximal violation of Bell inequality, it would be noteworthy to find out regions in the relevant parameter space where photon subtraction gives a higher violation than photon addition.
We will address this point in the succeeding section.

%relation, $a^\dagger |\text{TMSV}\rangle \sim  b |\text{TMSV}\rangle$, where $a$ and $b$ are photon number lowering operators of modes $1$ and $2$ respectively. Hence adding $n$ photons in mode $1$ is equivalent to subtracting $n$ photons from mode $2$. Moreover since the Bell violation does not depend on which mode we are adding/subtracting photons, single mode adition/subtraction gives the same Bell violation. 

\subsection{Addition/subtraction of arbitrary number of photons}
The maximal violation of Bell inequality for the photon-added TMSV state, $|\psi_r^{(k,0)}\rangle$, in the first mode, has the form
\begin{eqnarray}
	\chi ^{max} ( |\psi_r^{(k,0)}\rangle &)& = 2\sqrt{1+ \mathcal{K}_{(k,0)}^2},
	\label{eq:K_single_mode}
\end{eqnarray}
where 
\begin{eqnarray}
	\mathcal{K}_{(k,0)} &=& 2\sum_{n=0}^\infty c_{2n}^{(k,0)}c_{2n+1}^{(k,0)},
	\end{eqnarray}
by using Eqs. \eqref{eq:added_state}-\eqref{eq:patmsv}, and \eqref{eq:max_bv}. The $(q_1,q_2)$-pair which yields this maximal value is $(k \text{ mod } 2,0)$.
	
%After photon addition in the first mode of TMSV, $|\psi_r^{(k,0)}\rangle$, the correllation function given in Eq. \eqref{eq:correlation_fn} takes the form, 
%\begin{eqnarray}
%E(\theta_a, \theta_b) = (-1)^k	\cos \theta_a \cos \theta_b &+& \mathcal{K}_{(k,0)}\sin \theta_a \sin \theta_b,  \nonumber 
%\end{eqnarray}
%which has the same structure as in Eq. \eqref{eq:correlation_fn_structure}, with,
%\begin{eqnarray}
%	\mathcal{K}_{(k,0)} &=& 2\sum_{n=0}^\infty c_{2n}^{(k,0)}c_{2n+1}^{(k,0)},  \nonumber 
%	\end{eqnarray}
%	and the corresponding maximal violation is given by
%	
%When $k$ is even, the optimal $q$-values required to compute the correlation function in Eq. \eqref{eq:correlation_fn} are $(q_1,q_2) = (0,0)$ while for odd $k$, optimal $(q_1,q_2)$ pair becomes  $(1,0)$.
%Note, the structure of Bell violation remains same for both even and odd photon added TMSV states. Thus in this sense, we can argue that $\lbrace \textbf{S}_{q=0} \rbrace$ and $\lbrace \textbf{S}_{q=1} \rbrace$ are at the same footing. Now, we can express $\mathcal{K}_{(k,0)}$ as
Note that the structure of $\mathcal{K}_{(k,0)}$ remains same for both even and odd number of  photon-added TMSV states. For $k \geq 2$, we obtain the expression for $\mathcal{K}_{(k,0)}$ as
\small
\begin{eqnarray}
&&\mathcal{K}_{(k,0)} = \nonumber \\ &&2(1-x)^{1+k}x^{\frac{1}{2}}\sum_{n=0}^\infty x^{2n} \prod_{i=2}^{k}(2n + i) \sqrt{(2n+1)(2n+k+1)}. \nonumber \\
  \label{eq:maxbv_singlemode}
\end{eqnarray} \normalsize
On the other hand, for  $k=0$ case, i.e., for the  TMSV state, 
$\mathcal{K}_{(0,0)} = 2{x^{1/2}}/{(1+x)} = \tanh 2r$, while if a single photon is added, it takes the form as
\begin{eqnarray}
\mathcal{K}_{(1,0)} = 2(1-x)^{2}x^{\frac{1}{2}}\sum_{n=0}^\infty x^{2n} \sqrt{(2n+1)(2n+2)}.
\label{eq:singlemode_p=0,1}
\end{eqnarray}
Apart from the  TMSV case, $\mathcal{K}_{(k,0)}$ and consequently $\chi^{max}$ cannot be computed analytically for any $k \geq 1$, due to the presence of a square root in the sum involved. Therefore, we resort to approximate methods like series expansion and numerical techniques to compute these summations. For numerical calculations, we first evaluate the above summations upto $n=N$ terms. We then check whether the difference between $\chi^{max}$ with partial sums upto $N$ and $N+1$ terms falls below $10^{-10}$. If this is the case, we conclude that the summation with $N$ terms is sufficient.

Before presenting the results with series expansion, let us discuss the findings with numerical method.
Our analysis reveals an overall enhancement of the maximal violation of Bell inequality in terms of  pseudospin operators of the TMSV state with moderate number of added photons (see Fig. \ref{fig:single_mode1}).  To put the amount of enhancement in a quantitative perspective, we calculate the gain, $\mathcal{G}$, as in  Eq. \eqref{eq:gain}, for some typical values of $r$ and added number of photons, $k$, in the first mode, and is summarized in Table. \ref{table:gain1}. 
As clearly depicted in Fig.  \ref{fig:single_mode1}, there exists a critical value of the squeezing parameter, $r$, beyond which photon addition may lead to a decrement in the maximal violation of Bell inequality, when either a single photon or more than that is added (comparing Figs. \ref{fig:single_mode1}(a) and \ref{fig:single_mode1}(b)). 
Even when we are unable to compute $\chi^{max}$ analytically, the approximate method helps us to obtain the critical value of $r$, $r_c$,  where the violation decreases with addition of photons. In this respect, let us state the following theorem, which shows a special status of a single-photon-addition.

\begin{table}[h]
\begin{center}
\begin{tabular}{ |c| c | c | c | c | } 
\hline
\backslashbox{$r \downarrow$}{$k \rightarrow$} & 2 & 5 & 10 & 15\\
\hline
0.2 & 9.5 & 18.4 & 25.7 & 29.6 \\
\hline
0.5 & 10.0  & 10.9 & 10.7 & 11.4\\
\hline
0.8 & 3.2 & 3.1  & 3.4  & 3.7\\
\hline
1.2 & 0.4 & 0.6 & 0.73 & 0.76 \\
\hline
\end{tabular}
\caption{Percentage of gain, $\mathcal{G}\times 100$, for some typical values of the squeezing parameter, $r$, and added number of photons, $k$, with respect to the TMSV state.}
\label{table:gain1}
\end{center}
\end{table}

\begin{figure}[h]
\includegraphics[width=\linewidth]{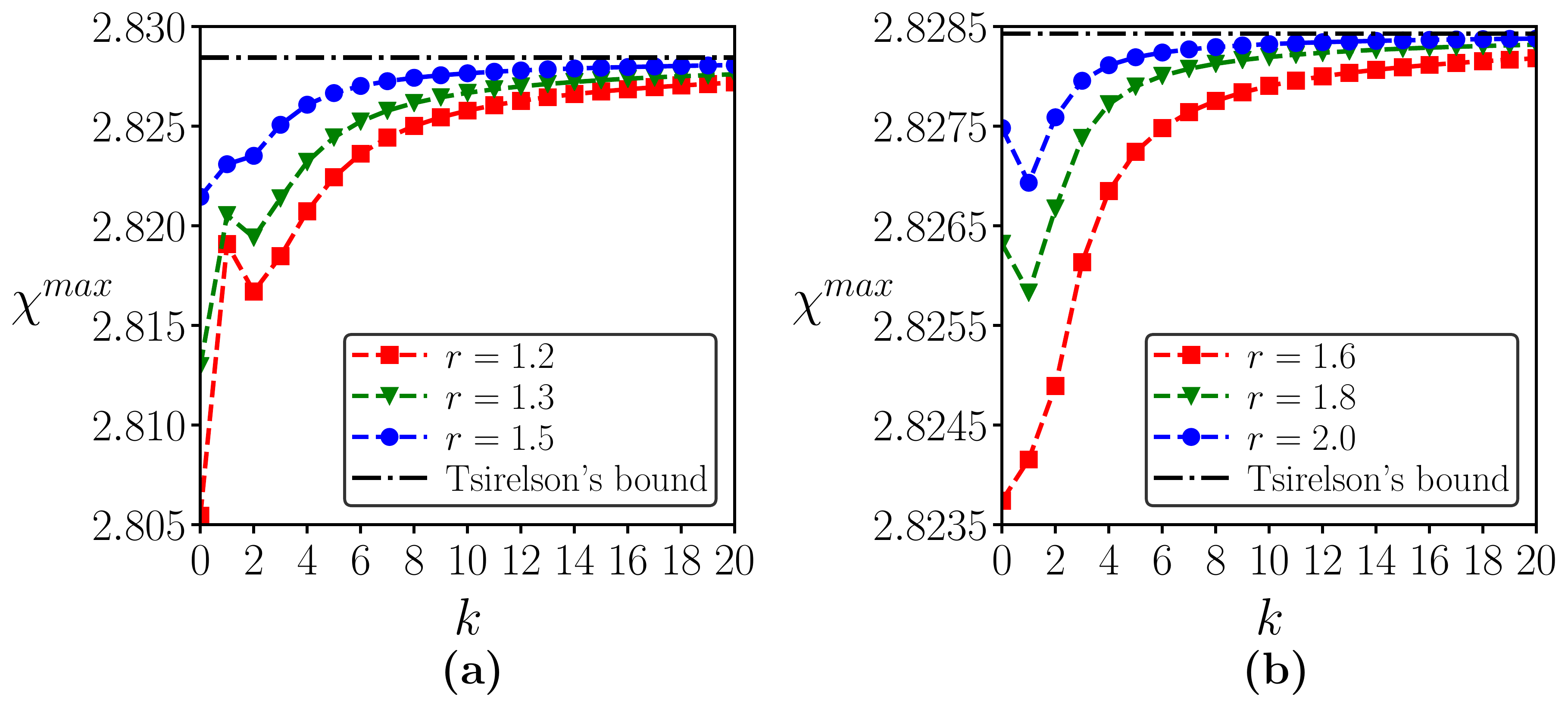}
\caption{Maximal violation of Bell inequality with respect to added (subtracted) number of photons, $k$, in the first mode.
In (a) and (b), different values of the squeezing parameter, $r$, have been considered. In (a), we choose those values of $r$ where $\chi^{max}$ decreases when two or more than two photons are added, while in (b), whenever $\chi^{max}$ shows decreasing nature, it occurs after an addition of a single photon. Clearly, such values of $r$ are above $1.66$ as found in Theorem \ref{th:th1}.
All quantities plotted are dimensionless. 
}
\label{fig:single_mode1}
\end{figure}

%\textcolor{red}{Although we cannot compute the sums analytically, we can appropriately approximate it to get some analytical results. Specifically we try to compute the critical values of $r$ (previously obtained numerically), which indicate transitions from monotonicity to non-monotonicity  or vice versa. The above attempt leads to the formulation of the following theorem.}

 \begin{thm} \label{th:th1}
%Beyond a critical value of the squeezing parameter ($r_c \sim 1.42$),
The maximal violation of Bell inequality based on pseudospin operators shows diminution in comparison to the TMSV state after the addition (subtraction) of a single photon for any finite squeezing parameter, $r$, beyond a critical value, $r_c \approx 1.66$.
\end{thm}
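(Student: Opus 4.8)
The plan is to reduce the claimed inequality to a statement about a single real function of $x=\tanh^2 r\in(0,1)$ and then show that this function changes sign exactly once; by the equivalence $\chi^{max}(|\psi_r^{(1,0)}\rangle)=\chi^{max}(|\psi_r^{(-1,0)}\rangle)$ recorded above, it suffices to treat photon addition. By Eq.~\eqref{eq:max_bv}, $\chi^{max}=2\sqrt{1+\mathcal{K}^2}$ is strictly increasing in $\mathcal{K}\ge 0$, so the diminution $\chi^{max}(|\psi_r^{(1,0)}\rangle)<\chi^{max}(|\psi_r\rangle)$ is equivalent to $\mathcal{K}_{(1,0)}<\mathcal{K}_{(0,0)}$. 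I would insert the closed forms $\mathcal{K}_{(0,0)}=2\sqrt{x}/(1+x)$ and, from Eq.~\eqref{eq:singlemode_p=0,1}, $\mathcal{K}_{(1,0)}=2(1-x)^2\sqrt{x}\sum_{n\ge 0}x^{2n}\sqrt{(2n+1)(2n+2)}$, cancel the common factor $2\sqrt{x}$, and note that diminution holds iff $g(x):=(1-x)^2(1+x)\sum_{n\ge 0}x^{2n}\sqrt{(2n+1)(2n+2)}<1$.

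The key algebraic move is to split off the ``defect'' $\delta_n:=(2n+\tfrac32)-\sqrt{(2n+1)(2n+2)}=\dfrac{1/4}{(2n+\tfrac32)+\sqrt{(2n+1)(2n+2)}}>0$, so that $\sum_n x^{2n}\sqrt{(2n+1)(2n+2)}=\sum_n x^{2n}(2n+\tfrac32)-\sum_n x^{2n}\delta_n$. Using only $\sum_n x^{2n}=1/(1-x^2)$ and $\sum_n n\,x^{2n}=x^2/(1-x^2)^2$ one finds $\sum_n x^{2n}(2n+\tfrac32)=(x^2+3)/[2(1-x^2)^2]$, and after the $(1\pm x)$-powers cancel this collapses to the clean identity $g(x)-1=(1-x)^2(1+x)\bigl[\tfrac{1}{2(1+x)^2}-\sum_{n\ge 0}x^{2n}\delta_n\bigr]$. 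Hence the diminution holds precisely when $\Phi(x):=\sum_{n\ge 0}x^{2n}\delta_n-\dfrac{1}{2(1+x)^2}>0$.

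It then remains to show $\Phi$ has a single sign change. The series $\sum_n\delta_n x^{2n}$ has radius of convergence $1$, so $\Phi$ is real-analytic on $[0,1)$ and may be differentiated term by term, giving $\Phi'(x)=\sum_{n\ge 1}2n\,x^{2n-1}\delta_n+(1+x)^{-3}>0$ on $(0,1)$; thus $\Phi$ is strictly increasing. At the endpoints, $\Phi(0)=\delta_0-\tfrac12=1-\sqrt{2}<0$, while $\sqrt{(2n+1)(2n+2)}<2n+\tfrac32$ forces $\delta_n>1/(16n+12)$, so $\sum_n\delta_n$ diverges and by monotone convergence $\Phi(x)\to+\infty$ as $x\to1^-$. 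Therefore $\Phi$ has a unique zero $x_c\in(0,1)$, with $\Phi>0$ exactly on $(x_c,1)$, i.e.\ diminution occurs exactly for $r>r_c:=\arctanh\sqrt{x_c}$. Finally, since the tail of $\sum_n x^{2n}\delta_n$ is dominated by a geometric series, I would evaluate $\Phi$ numerically to locate $x_c\approx0.865$, giving $r_c\approx1.66$.

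The main obstacle I expect is the bookkeeping in the second step: carrying through the cancellation of all the powers of $(1-x)$ and $(1+x)$ — those from the normalization prefactor $(1-x)^2$, from the $1/(1+x)$ in $\mathcal{K}_{(0,0)}$, and from the geometric sums — to arrive at the strikingly simple residual $1/[2(1+x)^2]$. Once that identity is secured the rest is routine: monotonicity of $\Phi$ together with the two endpoint evaluations gives uniqueness of $r_c$, and fixing its numerical value $\approx1.66$ is a convergent finite computation rather than a delicate asymptotic estimate.
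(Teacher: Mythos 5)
Your proof is correct, and it takes a genuinely different route from the paper's. The paper truncates $\sqrt{(2n+1)(2n+2)}=\tfrac{X+Y}{2}\sqrt{1-1/(X+Y)^2}$ at second order to get a closed form $\mathcal{K}_{(1,0)}^{\mathrm{approx}}$ in terms of ${}_2F_1$ and the Lerch transcendent, observes that this truncation \emph{overestimates} $\mathcal{K}_{(1,0)}$, and concludes that once $\mathcal{K}_{(1,0)}^{\mathrm{approx}}-\mathcal{K}_{(0,0)}$ turns negative (at $x\gtrsim 0.86$, i.e.\ $r\gtrsim 1.66$) the true difference must be negative as well; the neglected terms are argued to be $O(10^{-6})$ near the threshold. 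You instead keep the square root exact, isolate the defect $\delta_n=(2n+\tfrac32)-\sqrt{(2n+1)(2n+2)}$, and your algebraic identity $g(x)-1=(1-x)^2(1+x)\bigl[\tfrac{1}{2(1+x)^2}-\sum_n x^{2n}\delta_n\bigr]$ checks out (the $(1-x)^2$ cancellation with $x^2-2x+1$ in the numerator works exactly as you say). This reduces the theorem to the sign of the strictly increasing function $\Phi$, whose endpoint values $\Phi(0)=1-\sqrt2<0$ and $\Phi(x)\to+\infty$ (via $\delta_n>1/(16n+12)$ and monotone convergence) give a \emph{unique} crossing. What your version buys is a rigorous proof that the diminution region is exactly an interval $r>r_c$ — the paper's one-sided overestimate only shows diminution \emph{beyond} its approximate threshold and cannot by itself exclude diminution below it or certify uniqueness of the crossing — while numerics enter only to locate $x_c\approx0.865$ with a controlled geometric tail. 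What the paper's version buys is explicit special-function expressions that it reuses verbatim for Propositions \ref{pr:pr1}–\ref{pr:pr3}; if you wanted to extend your defect method to those cases you would need to redo the cancellation identity for each pair $(k,k')$ compared, which is less automatic.
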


\begin{proof}
We start by approximating the square root term $\sqrt{(2n +1)(2n+2)}$ in  $\mathcal{K}_{(1,0)}$, given in Eq. \eqref{eq:singlemode_p=0,1}. Let $X = 2n+1$ and $Y=2n+2$. Now, using the identity, $(X+Y)^2 - (X-Y)^2 = 4XY$, and putting $(X-Y)^2 = 1$, we get
\begin{eqnarray}
\sqrt{XY} &=& \frac{X+Y}{2}\sqrt{1-\frac{1}{(X+Y)^2}} \nonumber \\
&\approx&  \frac{X+Y}{2}\Big[1 - \frac{1}{2(X+Y)^2}-\frac{1}{8(X+Y)^4}\Big].
\label{eq:am_gm1}
\end{eqnarray}
Substitution of the value of $(X+Y)$ in Eq. \eqref{eq:am_gm1} gives 
\small
\begin{eqnarray}
 \sqrt{(2n+1)(2n+2)} \approx \Big(2n + \frac{3}{2}\Big) - \frac{1}{4(4n + 3)}- \frac{1}{16(4n+3)^3}. \nonumber \\
\label{eq:am_gm_s1}
\end{eqnarray} \normalsize
%We substitute the approximate value of $\sqrt{(2n+1)(2n+2)}$ as given in Eq. \eqref{eq:am_gm_s1} to evaluate $K_{(1,0)}$ and consequently $\chi^{max}(|\psi_r^{(1,0)}\rangle)$. See Eq. \eqref{eq:singlemode_p=0,1}. 
Under this approximation, $\mathcal{K}_{(1,0)}$  possess a closed form in terms of known standard functions, given by
\small 
\begin{eqnarray}
\mathcal{K}_{(1,0)}^{{approx}} &=& 2\frac{x^{1/2}}{1+x} \Big[\frac{3+x^2}{2(1+x)} -(1-x)^2(1+x)\times \nonumber \\
&\Big(&\frac{{_2}F_1(3/4,1,7/4,x^2)}{12} + \frac{\Phi(x^2,3,3/4)}{2^{10}}\Big)\Big],
\label{eq:am_gm_s1-s0}
\end{eqnarray} \normalsize
where ${_2}F_1$ denotes the Gauss hypergeometric function \cite{arfken},  and $\Phi$ denotes the Lerch transcendent \cite{tran}. Note that the approximation used in Eq. \eqref{eq:am_gm_s1} always leads to an overestimation of $\sqrt{(2n+1)(2n+2)}$, and therefore $\mathcal{K}_{(1,0)}^{{approx}} > \mathcal{K}_{(1,0)}$.
%-(1-x)^2(1+x)(\frac{{_2}F_1(3/4,1,7/4,x^2)}{12} + \frac{\phi(x^2,3,3/4)}{2^{10}})\Big]
Now the quantity $\mathcal{K}_{(1,0)}^{{approx}} - \mathcal{K}_{(0,0)}$, and consequently $\mathcal{K}_{(1,0)} - \mathcal{K}_{(0,0)}$ and $\chi^{max}(|\psi_r^{(1,0)}\rangle)-\chi^{max}(|\psi_r\rangle)$ becomes negative when $x \gtrsim 0.86$, i.e., $r \gtrsim 1.66$ ($\approx r_c$), and asymptotically approaches to zero from below when $x \rightarrow \infty$. 
%The maximal Bell violation for single photon added or subtracted TMSV exhibits a generic diminution from that obtained in case of the bare TMSV state for any finite $r>r_c$.
%Note, even if the calculations were performed by approximating $\chi^{max}(|\psi_r^{(1,0)}\rangle)$, the above proved theorem is valid in general. This is because
Since the process of approximation gives an upper bound of $\mathcal{K}_{(1,0)}$ $\big($or $\chi^{max}(|\psi_r^{(1,0)}\rangle)\big)$, the diminution of maximal violation on adding a single photon persists even without the approximation. Furthermore, keeping upto second order terms is justified, since the next term in the sum near $r = 1.66$, only makes a contributuion of O($10^{-6}$) to the sum. Such observation remains true for all the propositions in this and succeeding sections. Hence the proof. 
\end{proof}
%There are some natural questions that arises from the above theorem. 
%In Theorem 1, we found out the critical $r$ for monotonic to nonmonotonic transition using series expansion method when one photon was added to the TMSV state. So firstly, we enquire about the effect of adding more photons to the TMSV state on the maximal violation?
\noindent Note that by numerical simulations, we find the above critical value $r_c$ to be $\approx 1.66$.
Although, in Theorem \ref{th:th1}, we have found the critical value of $r$, beyond which addition of a single photon always leads to diminution of the maximal violation of Bell inequality, increasing the number of added photons results in an overall enhancement of the maximal violation, as mentioned previously. 
However, the enhancement of the violation is accompanied by a seemingly generic nonmonotonic behavior with respect to the squeezing parameter, $r$. We find that there exists a range of the squeezing parameter, $1.42 < r < 1.66$, for which the maximal violation of local realism demonstrates a \emph{monotonic} enhancement with respect to the added number of photons (see Fig. \ref{fig:single_mode1}).
%Such a feature is absent in the case of entanglemnt \cite{cerf}.
 Apart from the above specified range of the squeezing parameter, $\chi^{max}$ displays a nonmonotonic behavior with the number of added photons (Fig. \ref{fig:single_mode1}). Note however that the nonmonotonicity obtained for $r < 1.42$ with $k$ is different than that of the photon-added state with $r > 1.66$. It is important to stress here that such a feature is absent in the case of entanglement \cite{cerf}.
%The above question can be answered by looking at  Fig. \ref{fig:single_mode1}, from which it is clear that as more photons are added to the TMSV case, there is an overall enhancement of the maximal violation of Bell inequalities. This feature of overall enhancement has also been verified for many other typical values of squeezing parameter ($0 < r < 2.5$) numerically.  
We now ask whether the critical value of the squeezing parameter for which nonmonotonic to monotonic transition occurs in the behavior of the  maximal violation can be found using the series expansion method.

% how does violation  From Fig. \ref{fig:single_mode1} it is clear that as more photons are added to the TMSV case, there is an overall enhancement of the maximal violation of Bell inequalities.  Can we find the critical squeezing parameter $r$ for the nonomonotonic to monotonic transition using the same?}
%
%\noindent $(2)$ For which squeezing?
%
%\noindent 

We observe from our numerical results, that when the squeezing parameter $r$ is close to the critical value, $1.42$, the transition is dictated by both the values of $\chi^{max}(|\psi_r^{(1,0)}\rangle)$ and $\chi^{max}(|\psi_r^{(2,0)}\rangle)$. 
The value of $\chi^{max}(|\psi_r^{(1,0)}\rangle)$ has been calculated in Eq. \eqref{eq:am_gm_s1-s0} while the evaluation of $\chi^{max}(|\psi_r^{(2,0)}\rangle)$ leads to the following proposition:

\begin{prop}\label{pr:pr1}
 Maximal violation of local realism based on pseudospin operators undergoes a transition from nonmonotonic to monotonic behavior with respect to added number of photons in a single mode for $r \approx 1.42$.
 \end{prop}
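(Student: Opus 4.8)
The plan is to mirror the proof of Theorem~\ref{th:th1}: reduce the nonmonotonic-to-monotonic transition to the behaviour of the sequence $\mathcal{K}_{(k,0)}$ at small $k$, obtain a closed form for $\mathcal{K}_{(2,0)}$ by the same square-root approximation, and then locate the crossing in $r$. Since $\chi^{max}(|\psi_r^{(k,0)}\rangle)=2\sqrt{1+\mathcal{K}_{(k,0)}^2}$ is strictly increasing in $\mathcal{K}_{(k,0)}\in[0,1]$, monotonicity of $\chi^{max}$ in $k$ is equivalent to monotonicity of $\mathcal{K}_{(k,0)}$ in $k$. From Theorem~\ref{th:th1} we already have $\mathcal{K}_{(1,0)}>\mathcal{K}_{(0,0)}=\tanh 2r$ for every $r<r_c\approx 1.66$, so within the window $r<1.66$ the earliest place the sequence can fail to be increasing is at $k=2$, and one verifies numerically that near $r=1.42$ it is precisely the $k=1$ versus $k=2$ comparison that is binding (no dip appears first at $k\ge 3$). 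The proposition therefore reduces to showing that $\mathcal{K}_{(2,0)}-\mathcal{K}_{(1,0)}$, as a function of $x=\tanh^2 r$, changes sign near $r\approx 1.42$, being negative below and positive above.

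For the closed form of $\mathcal{K}_{(2,0)}$ I would set $k=2$ in Eq.~\eqref{eq:maxbv_singlemode}, giving $\mathcal{K}_{(2,0)}=2(1-x)^3 x^{1/2}\sum_{n=0}^{\infty} x^{2n}(2n+2)\sqrt{(2n+1)(2n+3)}$, and then approximate the radical exactly as in Eq.~\eqref{eq:am_gm1}. With $X=2n+1$, $Y=2n+3$ one has $(X-Y)^2=4$ and $(X+Y)/2=2n+2$, so $\sqrt{XY}=2(n+1)\sqrt{1-\tfrac{1}{4(n+1)^{2}}}\approx 2(n+1)-\tfrac{1}{4(n+1)}-\tfrac{1}{64(n+1)^{3}}$, which again overestimates the true value. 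Substituting this back and resumming term by term gives $\mathcal{K}_{(2,0)}^{approx}$ as an explicit combination of rational functions of $x$ and a Lerch transcendent $\Phi$; here the denominators are integer shifts of $n$, so the summation is even more elementary than in Eq.~\eqref{eq:am_gm_s1-s0} and the transcendental piece collapses to a dilogarithm.

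Armed with this and with $\mathcal{K}_{(1,0)}^{approx}$ from Eq.~\eqref{eq:am_gm_s1-s0}, the final step is to study the difference $\mathcal{K}_{(2,0)}^{approx}-\mathcal{K}_{(1,0)}^{approx}$ as a function of $x$: it is positive for small $x$ and crosses zero at $x\approx 0.79$, i.e. $r\approx 1.42$. Combined with $\mathcal{K}_{(1,0)}>\mathcal{K}_{(0,0)}$ (Theorem~\ref{th:th1}, applicable since $1.42<1.66$), this yields $\mathcal{K}_{(0,0)}<\mathcal{K}_{(1,0)}<\mathcal{K}_{(2,0)}$ for $r$ just above $1.42$ --- the onset of the monotonic regime inside $(1.42,1.66)$ --- and $\mathcal{K}_{(2,0)}<\mathcal{K}_{(1,0)}$ for $r$ just below, i.e. the reappearance of nonmonotonicity. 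As in Theorem~\ref{th:th1}, one then notes that truncating the square-root expansion after second order is harmless: the first neglected term, carrying the prefactor $(1-x)^3 x^{1/2}$, contributes only of order $10^{-6}$ near $r=1.42$, well below the precision of the quoted critical value, so the location of the crossing survives once the approximation is removed.

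The main obstacle is that the argument here is not as clean as in Theorem~\ref{th:th1}. There, one side of the comparison ($\mathcal{K}_{(0,0)}$) is exact and the approximation to the other side is one-sided, so the diminution region is rigorously contained in a computable set. Here both $\mathcal{K}_{(1,0)}^{approx}$ and $\mathcal{K}_{(2,0)}^{approx}$ are overestimates, so their difference does not automatically bound $\mathcal{K}_{(2,0)}-\mathcal{K}_{(1,0)}$ from either side; establishing that the true crossing still sits at $r\approx 1.42$ rests on the direct estimate of the neglected tail terms rather than on an inequality. A secondary bookkeeping issue is carrying out the resummation with the correct parameters and confirming numerically that no earlier dip (at $k\ge 3$) pre-empts the $k=2$ comparison in the relevant range of $r$.
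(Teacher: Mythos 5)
Your proposal follows essentially the same route as the paper's proof: set $k=2$ in Eq.~\eqref{eq:maxbv_singlemode}, apply the same one-sided square-root approximation to get a closed form for $\mathcal{K}_{(2,0)}$ (the paper's Eq.~\eqref{eq:am_gm_s2}, with the dilogarithm you anticipate), and locate the sign change of $\mathcal{K}_{(2,0)}^{approx}-\mathcal{K}_{(1,0)}^{approx}$ at $x\approx 0.79$, i.e.\ $r\approx 1.42$. One internal slip: that difference is \emph{negative} for small $x$ and becomes positive for $x\gtrsim 0.79$ --- as your own concluding sentences correctly state --- not ``positive for small $x$''; otherwise the argument, including your (valid) caveat that both sides being overestimates makes this comparison less rigorous than the one in Theorem~\ref{th:th1}, matches the paper.
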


\begin{proof}
 By substituting $k=2$ in Eq. \eqref{eq:maxbv_singlemode}, we obtain 
 \small
 \begin{eqnarray}
 \mathcal{K}_{(2,0)} = 2(1-x)^{3}x^{\frac{1}{2}}\sum_{n=0}^\infty x^{2n} (2n + 2) \sqrt{(2n+1)(2n+3)}. \nonumber  \\
 \end{eqnarray}\normalsize
 We approximate $\sqrt{(2n+1)(2n+3)}$ in the same lines as in Eq. \eqref{eq:am_gm1}, and get
 \small
 \begin{eqnarray}
 \sqrt{(2n+1)(2n+3)} \approx 2(n+1) - \frac{1}{4(n+1)} - \frac{1}{8(n+1)^3}.\nonumber \\
 \label{eq:am_gm2}
 \end{eqnarray} \normalsize
 Using the above expression, we get  the approximate value of $\mathcal{K}_{(2,0)}$ as 
 \begin{eqnarray}
\mathcal{K}_{(2,0)}^{approx} &=& \frac{(1-x)^3 x^{1/2}}{2}\Big[ \frac{8x^4}{(1-x^2)^3}-\frac{12x^2}{(1-x^2)^2}\nonumber \\
&-&\frac{7}{2(1-x^2)}-\frac{1}{32x^2}\text{Li}_2(x^2) \Big],
\label{eq:am_gm_s2}
 \end{eqnarray}
 where $\text{Li}_2$ is the polylogarithmic function of order 2. Now, $\mathcal{K}_{(2,0)}^{approx} - \mathcal{K}_{(1,0)}^{approx}$, and consequently $\chi^{max}_{approx}(|\psi_r^{(2,0)}\rangle)-\chi^{max}_{approx}(|\psi_r^{(1,0)}\rangle)$ becomes positive for $x \gtrsim 0.79$, i.e., $r \gtrsim 1.42$. 
 \end{proof}

\noindent \textbf{Remark 1.} The nonmonotonic to monotonic transition and vice-versa in maximal violation of local realism with the added number of photons happens because $\chi^{max}$ first decreases, and then starts increasing before saturating to a certain value with addition.

\noindent \textbf{Remark 2.} The criticalities in $r$ are obtained by keeping upto the second order terms in series (Eqs.\eqref{eq:am_gm_s1} and \eqref{eq:am_gm2}). We find that such approximations nicely match with the values obtained from the numerical simulations.

\noindent \textbf{Remark 3.} The critical squeezing parameters obtained in the above cases can, in principle, be observed in laboratories, as all the critical values of the squeezing parameter are below the maximal amount of experimentally generated squeezing, i.e.,  $r \approx 1.73$ \cite{15db}.

Therefore, by combining the results from Theorem \ref{th:th1} and Proposition \ref{pr:pr1}, we zero in on the squeezing parameter window for which the maximal violation shows monotonic enhancement on adding (subtracting) photons from a single mode using the series expansion method, which clearly agree with Fig. \ref{fig:single_mode1}.
To get more intuitive insights,  we now look at the cases of addition (subtraction) of even or odd number of photons separately in the next subsection.

\subsection{Even-odd dichotomy}
%\textcolor{red}{Before discussing the nature of violation of Bell inequality for photon added states with even and odd number of photons, let us first note that the nonmonotonic to monotonic transition and vice-versa of maximal violation of local realism with the added number of photons happes because with addition, $\chi^{max}$ first decreases and then starts increasing before saturating to a certain value.}

To find out the reason behind such dependence on squeezing parameter of maximal violation, we now study separately the TMSV states when even (odd) number of photons are added. The intuition for such investigation comes from the fact the $\chi^{max}$ depends of $(q_1, q_2)$-pair which is different for odd and even number of photons.
Let us first restrict ourselves to addition of even number of photons from the first mode, and study violations of Bell inequality with respect to the number of photons added for fixed values of the squeezing parameter. From numerical simulations, we find when only even number of photons are added from a particular mode  (Figs. \ref{fig:single_mode_even}(a) and (b)), the maximal violation shows monotonic enhancement for $r \gtrsim 0.94$. We now show the same by using the series expansion method.
%, the non-monotonicity of BV vanishes when the squeezing parameter  ($r$) is increased beyond $0.9368$. So, we observerve a non-motonic to monotonic transition of maximal Bell violation with respect to $r$ in the case of even poton addition/subtraction.}

\begin{prop} \label{pr:pr2}
The maximal violation of  Bell inequality undergoes a transition from nonmonotonic to monotonic behavior with respect even number of photon addition in a single mode when $r \gtrsim 0.92$.
\end{prop}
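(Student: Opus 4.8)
The plan is to follow the template already used for Theorem~\ref{th:th1} and Proposition~\ref{pr:pr1}. First I would note that when an \emph{even} number of photons is added to a single mode, the optimal pair is $(q_1,q_2)=(0,0)$, exactly as for the bare TMSV state, so the object of interest is the subsequence $\{|\psi_r^{(2m,0)}\rangle\}_{m\ge 0}$; by the mechanism of Remark~1 (a single dip followed by a monotone rise to the saturation value $2\sqrt 2$) the onset of monotonicity along this subsequence is controlled by the first comparison that can fail, which---since the gain of $k=2$ over $k=0$ is positive in the whole relevant window---is $\chi^{max}(|\psi_r^{(4,0)}\rangle)$ versus $\chi^{max}(|\psi_r^{(2,0)}\rangle)$, i.e.\ $\mathcal{K}_{(4,0)}$ versus $\mathcal{K}_{(2,0)}$. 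Since $\mathcal{K}_{(2,0)}^{approx}$ is already available in Eq.~\eqref{eq:am_gm_s2}, the real task is to obtain a closed form for $\mathcal{K}_{(4,0)}$, and---as a side check---to confirm $\mathcal{K}_{(2,0)}\ge\mathcal{K}_{(0,0)}=\tanh 2r$ for $r\gtrsim 0.92$ from Eq.~\eqref{eq:am_gm_s2}, so that no dip already occurs between $k=0$ and $k=2$.

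Next I would put $k=4$ in Eq.~\eqref{eq:maxbv_singlemode},
\[
\mathcal{K}_{(4,0)} = 2(1-x)^{5}x^{\frac12}\sum_{n=0}^\infty x^{2n}\,(2n+2)(2n+3)(2n+4)\sqrt{(2n+1)(2n+5)},
\]
and approximate the surd exactly as before: with $X=2n+1,\ Y=2n+5$ one has $(X-Y)^2=16$ and $X+Y=4n+6$, so keeping up to second order
\[
\sqrt{(2n+1)(2n+5)}=(2n+3)\sqrt{1-\tfrac{16}{(4n+6)^2}}\approx (2n+3)-\tfrac{4}{4n+6}-\tfrac{16}{(4n+6)^3}.
\]
Substituting this back, and using $4n+6=2(2n+3)$, the three resulting pieces are a quartic polynomial $(2n+2)(2n+3)^2(2n+4)$, a quadratic $-2(2n+2)(2n+4)$, and---after $(2n+2)(2n+4)=(2n+3)^2-1$---a term $-2+2/(2n+3)^2$. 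The polynomial pieces sum to elementary rational functions of $x^2$ via $\sum_n n^j x^{2n}$, and $\sum_n x^{2n}/(2n+3)^2$ reduces to dilogarithms of $\pm x$ (equivalently a Lerch transcendent $\Phi(x^2,2,3/2)$), yielding a closed form $\mathcal{K}_{(4,0)}^{approx}$ in standard functions in the spirit of Eqs.~\eqref{eq:am_gm_s1-s0} and \eqref{eq:am_gm_s2}.

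With both closed forms in hand I would examine $\Delta(x)=\mathcal{K}_{(4,0)}^{approx}-\mathcal{K}_{(2,0)}^{approx}$, show by root-finding on the explicit expression that it has a single sign change, is negative below it and positive above, and that this root sits at $x\approx\tanh^2(0.92)$, hence the nonmonotonic-to-monotonic transition at $r\approx 0.92$. As in the earlier proofs one then argues that the approximation does not move the threshold appreciably: both $\mathcal{K}_{(2,0)}$ and $\mathcal{K}_{(4,0)}$ are \emph{over}estimates of the true values (the discarded Taylor terms of $\sqrt{1-z}$ are negative), these errors have the same sign and largely cancel in $\Delta$, and the first omitted correction is numerically tiny near the relevant $x$.

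I expect this last step to be the genuine obstacle here. In Theorem~\ref{th:th1} and Proposition~\ref{pr:pr1} the critical value lies at $x$ between $0.79$ and $0.86$, where the weight $x^{2n}$ decays fast and the low-$n$ terms---precisely where the two-term expansion of $\sqrt{1-z}$ with $z$ as large as $\sim 0.4$ is least accurate---carry very little weight; here the threshold sits at $x\approx 0.5$, so those low-$n$ terms carry noticeably more weight and the surd approximation is under more stress. One therefore has to make sure the residual error in $\Delta(x)$ stays well below the slope of $\Delta$ at its root, so that the quoted $r\approx 0.92$ is robust---in practice by comparing with the numerical sweep (which gives $r\gtrsim 0.94$) and, if necessary, retaining one further term in the expansion of $\sqrt{(2n+1)(2n+5)}$. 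A secondary, purely computational nuisance is that the quartic prefactor makes $\mathcal{K}_{(4,0)}$ considerably bushier than $\mathcal{K}_{(2,0)}$, so the partial-fraction bookkeeping that produces the polylog/Lerch terms must be done with care.
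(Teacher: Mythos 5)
Your proposal follows exactly the paper's route: identify the decisive comparison as $\chi^{max}(|\psi_r^{(4,0)}\rangle)$ versus $\chi^{max}(|\psi_r^{(2,0)}\rangle)$, reuse the closed form of $\mathcal{K}_{(2,0)}^{approx}$ from Eq.~\eqref{eq:am_gm_s2}, approximate $\sqrt{(2n+1)(2n+5)}\approx(2n+3)-\tfrac{2}{2n+3}-\tfrac{2}{(2n+3)^3}$ (your form in powers of $4n+6$ is identical), and locate the sign change of the difference at $x\gtrsim 0.51$, i.e.\ $r\gtrsim 0.92$. Your added remarks on error cancellation and the slight offset from the numerical threshold $r\approx 0.94$ are consistent with (and slightly more careful than) the paper's treatment, but the argument is essentially the same.
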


\begin{proof}
The insight, obtained from numerical simulation, as reflected in Fig. \ref{fig:single_mode_even}(a), tells us that when $r$ is close to 0.92, the nonmonotonicity can be observed in  the diminution of $\chi^{max}$ after adding atleast $4$ photons to the system. Hence unlike Theorem \ref{th:th1} and Proposition \ref{pr:pr1}, the quantity of interest now becomes $\chi^{max}(|\psi_r^{(4,0)}\rangle)-\chi^{max}(|\psi_r^{(2,0)}\rangle)$. The approximation of $\chi^{max}(|\psi_r^{(2,0)}\rangle)$ has already been done in Eq. \eqref{eq:am_gm_s2}, and hence we are left with the approximation of $\chi_{max}(|\psi_r^{(4,0)}\rangle)$. Expression of $\chi^{max}(|\psi_r^{(4,0)}\rangle)$ is obtained by substituting $k=4$ in Eq. \eqref{eq:maxbv_singlemode}, where
the square root term $\sqrt{(2n+1)(2n+5)}$  can be approximated as
% \small
\begin{eqnarray}
\sqrt{(2n+1)(2n+5)} \approx (2n+ 3) - \frac{2}{(2n + 3)} - \frac{2}{(2n + 3)^3}.\nonumber \\ 
\label{eq:am_gm_s4}
\end{eqnarray}%\normalsize
This approximation allows $\mathcal{K}_{(4,0)}$ and thereby  $\chi^{max}(|\psi_r^{(4,0)}\rangle)$ to be written in terms of known functions. Using Eqs. \eqref{eq:am_gm_s2} and \eqref{eq:am_gm_s4}, we find that $\mathcal{K}_{(4,0)}^{approx} - \mathcal{K}_{(4,0)}^{approx}$ and consequently $\chi^{max}_{approx}(|\psi_r^{(4,0)}\rangle)-\chi^{max}_{approx}(|\psi_r^{(2,0)}\rangle) \gtrsim 0$, implies $x \gtrsim 0.51$ or $r \gtrsim 0.92$. 
\end{proof}

%\noindent Note, the numerically obtained value of the above transition is $r_c \sim 0.9368$. 
%2. The $2^{nd}$ order approximation, when the relevant critical value of squeezing parameter in a particular case is low, } 

\begin{figure}
\includegraphics[width=\linewidth]{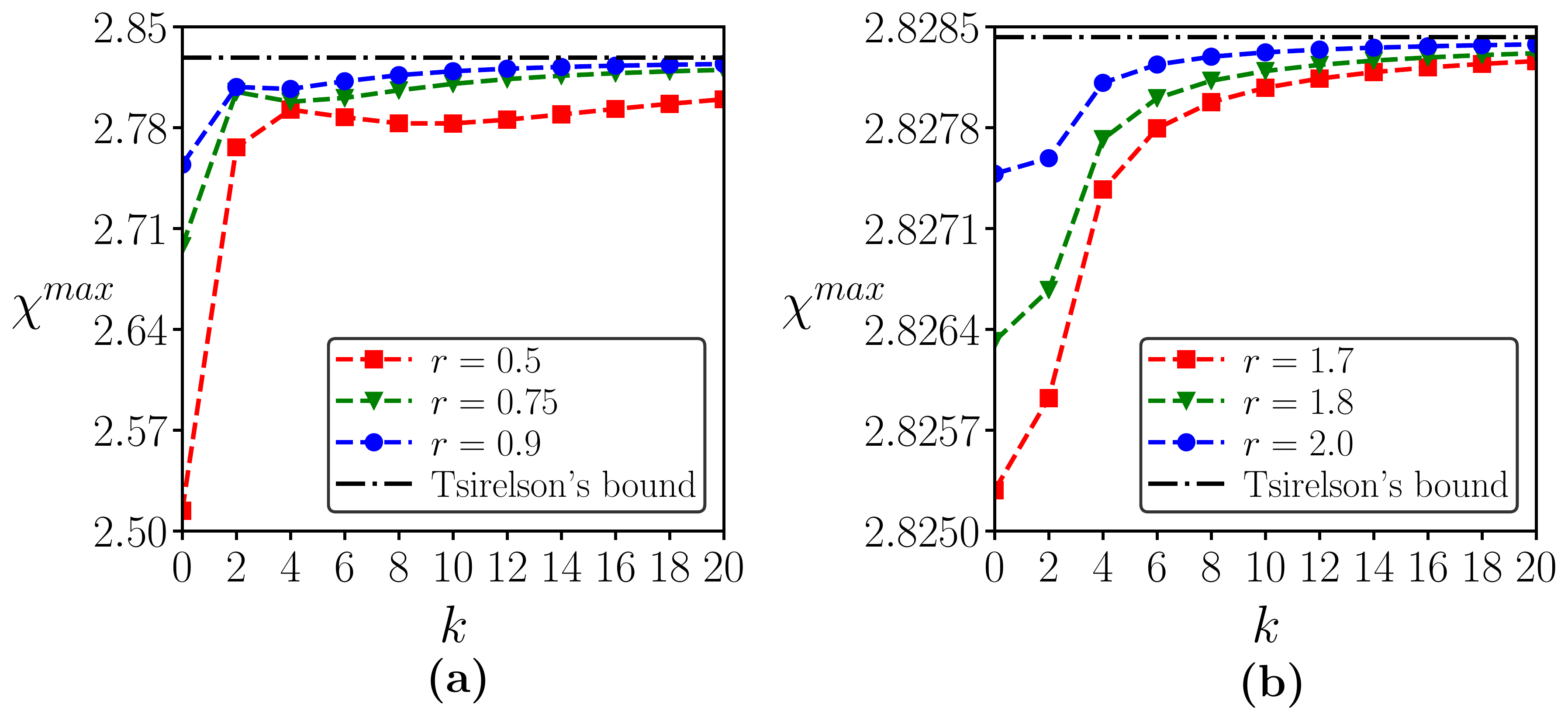} \\
\includegraphics[width=\linewidth]{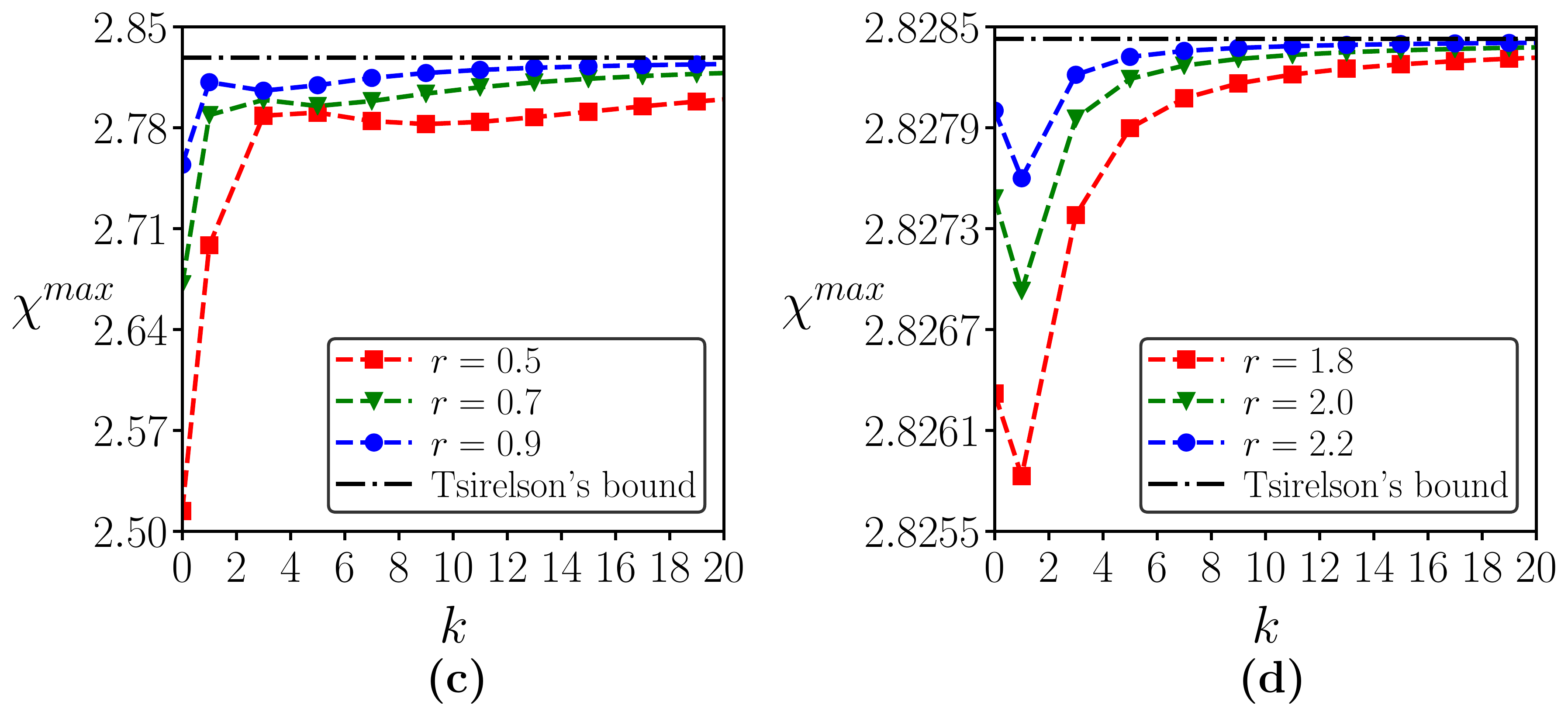}
\caption{Even vs. odd. (Upper panel) $\chi^{max}$ against even number of added photons in a single mode, while (lower panel) $\chi^{max}$ with odd $k$. All quantities plotted are dimensionless.}
\label{fig:single_mode_even}
\end{figure}

We now move to the situation where odd number of photons are added. Interestingly, a qualitatively different picture emerges in this case compared to the even-photon-addition (Figs. \ref{fig:single_mode_even}(c) and (d)). In particular,  there exists only a region in the squeezing parameter, namely $1.23 < r < 1.66$, where we get monotonic behavior of $\chi^{max}$ with odd number of  added photons, $k$. Here, we should also comment that the nature of nonmonotonicity for  $r < 1.23$ and $r > 1.66$ are different, as also seen in Fig. \ref{fig:single_mode1}.
 When $r > 1.66$, the nonmonotonicity is reflected by a decrement in the maximal violation of Bell inequality after addition of a single photon, as already mentioned in Theorem \ref{th:th1}, while for $r < 1.23$, the nonmonotonicity is observed after addition higher number of photons (see Fig. \ref{fig:single_mode_even}(c)). However, the feature of overall enhancement of the violation for higher values of $k$ compared to the TMSV state persists both for even as well as odd $k$. We again employ the series expansion method for obtaining the lower bound on $r$.

\begin{prop} \label{pr:pr3}
 When only odd number of photons are added (subtracted) to a single mode of the TMSV state, the maximal  violation undergoes a transition from nonmonotonic  to monotonic behavior at $r \approx 1.23$.
\end{prop}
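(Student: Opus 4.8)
The plan is to follow the template of Theorem~\ref{th:th1} and Propositions~\ref{pr:pr1}--\ref{pr:pr2}: single out the one pair of photon-added states whose ordering reverses at the critical squeezing, produce closed-form approximations for both of their $\mathcal{K}$-values, and locate the sign change of the difference. With $k$ restricted to odd values, ``monotonic behaviour'' refers to the sequence $\chi^{max}(|\psi_r\rangle),\,\chi^{max}(|\psi_r^{(1,0)}\rangle),\,\chi^{max}(|\psi_r^{(3,0)}\rangle),\dots$ being increasing; its upper endpoint $r\approx 1.66$ is already fixed by Theorem~\ref{th:th1} (the step $k=0\to 1$ reverses there), so only the lower endpoint needs to be established. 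Our numerics (Fig.~\ref{fig:single_mode_even}(c)) show that for $r$ just below $1.23$ the first breakdown of monotonicity occurs in the step $k=1\to k=3$, so the transition is governed by the sign of $\chi^{max}(|\psi_r^{(3,0)}\rangle)-\chi^{max}(|\psi_r^{(1,0)}\rangle)$, equivalently of $\mathcal{K}_{(3,0)}-\mathcal{K}_{(1,0)}$. Since $\mathcal{K}_{(1,0)}^{approx}$ is already available from Eq.~\eqref{eq:am_gm_s1-s0}, the only new object is $\mathcal{K}_{(3,0)}$.

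Putting $k=3$ in Eq.~\eqref{eq:maxbv_singlemode} gives $\mathcal{K}_{(3,0)}=2(1-x)^{4}x^{1/2}\sum_{n=0}^{\infty}x^{2n}(2n+2)(2n+3)\sqrt{(2n+1)(2n+4)}$. I would approximate the radical exactly as in Eq.~\eqref{eq:am_gm1}, now with $X=2n+1$, $Y=2n+4$ so that $(X-Y)^2=9$ and $X+Y=4n+5$, obtaining $\sqrt{(2n+1)(2n+4)}\approx\big(2n+\frac{5}{2}\big)-\frac{9}{4(4n+5)}-\frac{81}{16(4n+5)^{3}}$, which again overestimates the radical, hence $\mathcal{K}_{(3,0)}^{approx}>\mathcal{K}_{(3,0)}$. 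Substituting, the quadratic prefactor $(2n+2)(2n+3)$ is reduced by polynomial division against powers of $(4n+5)$; the sum then splits into a rational function of $x$ coming from the polynomial part (with denominator $(1+x)^{4}$) together with terms $\sum_{n}x^{2n}/(4n+5)^{m}$ for $m=1,2,3$, i.e., a Gauss hypergeometric ${}_2F_1(\cdot,\cdot,\cdot,x^{2})$ and Lerch transcendents $\Phi(x^{2},m,5/4)$. Thus $\mathcal{K}_{(3,0)}^{approx}$ acquires a closed form in the same standard functions used for $\mathcal{K}_{(1,0)}^{approx}$.

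With both closed forms in hand, the final step is to check that $\mathcal{K}_{(3,0)}^{approx}(x)-\mathcal{K}_{(1,0)}^{approx}(x)$, and hence $\chi^{max}_{approx}(|\psi_r^{(3,0)}\rangle)-\chi^{max}_{approx}(|\psi_r^{(1,0)}\rangle)$, is negative for small $x$, crosses zero once, and turns positive for $x\gtrsim 0.71$, i.e., $r\gtrsim 1.23$; combined with the upper bound $r\approx 1.66$ from Theorem~\ref{th:th1}, this delimits the monotonic window $1.23<r<1.66$ quoted in the text. I would also confirm numerically that for $r$ slightly above $1.23$ none of the later odd steps ($k=3\to 5$, $k=5\to 7$, \dots) breaks monotonicity, so that $k=1\to 3$ is genuinely the binding comparison.

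The main obstacle is the subtlety already flagged in Remark~2: since both $\mathcal{K}_{(1,0)}^{approx}$ and $\mathcal{K}_{(3,0)}^{approx}$ overestimate the true values, the sign of their difference controls the true sign only if the series-truncation error near the crossing is smaller than that difference. I would therefore estimate the first neglected term of the expansion in Eq.~\eqref{eq:am_gm1} around $x\approx 0.71$ and verify that it contributes only O($10^{-6}$), so that the critical value is robust and matches the numerics. The rest of the work --- carrying out the polynomial division and assembling the $\Phi(x^{2},m,5/4)$ tower into the closed form of $\mathcal{K}_{(3,0)}^{approx}$ --- is heavier than in the $k=1$ case because of the quadratic prefactor, but routine.
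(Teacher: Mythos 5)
Your proposal follows essentially the same route as the paper: it identifies $\chi^{max}(|\psi_r^{(3,0)}\rangle)-\chi^{max}(|\psi_r^{(1,0)}\rangle)$ as the binding comparison, and your expansion $\sqrt{(2n+1)(2n+4)}\approx(2n+\tfrac{5}{2})-\tfrac{9}{4(4n+5)}-\tfrac{81}{16(4n+5)^{3}}$ is algebraically identical to the paper's Eq.~\eqref{eq:am_gm_s3} (since $4n+5=2(2n+\tfrac{5}{2})$), leading to the same crossing at $x\gtrsim 0.71$, i.e., $r\gtrsim 1.23$. Your extra caveat --- that because \emph{both} $\mathcal{K}_{(1,0)}^{approx}$ and $\mathcal{K}_{(3,0)}^{approx}$ are overestimates, the sign of their difference only controls the true sign once the truncation error near the crossing is bounded --- is the point the paper addresses via the $O(10^{-6})$ remark, so the argument is complete.
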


\begin{proof}
Again
our numerical results
help us to identify $k$ in $\mathcal{K}_{(k, 0)}$ relevant to prove this proposition. In this case, we notice that  the quantity, $\chi^{max}(|\psi_r^{(3,0)}\rangle)-\chi^{max}(|\psi_r^{(1,0)}\rangle)$ is appropriate. 
The square root term $\sqrt{(2n+1)(2n+4)}$ in $\mathcal{K}_{(3,0)}$ (see Eq. \eqref{eq:maxbv_singlemode}) can be approximately written as
%\scriptsize
\begin{eqnarray}
\sqrt{(2n+1)(2n+4)} \approx (2n &+& 5/2) - \frac{9}{8(2n + 5/2)}  \nonumber \\ &-& \frac{81}{2^7 (2n + 5/2)^3}.
\label{eq:am_gm_s3}
\end{eqnarray}%\normalsize
This approximation allows $\mathcal{K}_{(3,0)}$ to be written in terms of known hypergeometric and transcendental functions. 
We find $r \gtrsim 1.23$, for which $\chi^{max}_{approx}(|\psi_r^{(3,0)}\rangle)-\chi^{max}_{approx}(|\psi_r^{(1,0)}\rangle) \gtrsim 0$.  
\end{proof}

%\noindent The critical $r$ for monotonic to non-monotonic transition has already been calculated in while prooving Theorem 1.

\noindent \textbf{Remark.} There exists a region, $1.23 < r < 1.42$, where both addition of even and odd number of photons lead to monotonic enhancement of maximal Bell-violation, although the combined curve shows nonmonotonicity. This can be understood by noting the following fact. Even if both even and odd operations give monotonic violation of Bell inequality, it does not guarantee that their combined effect would be   monotonic. Hence, individual monotonicity is necessary but not a sufficient condition for combined monotonicity. However,  nonmonotonic maximal Bell-violation in individual cases ensures the violation for the state with arbitrary number of added photons to be nonmonotonic.

%\textcolor{blue}{Compare with cerf's result of monotonicity of entanglemnt.}
We know that the violation of Bell inequality by quantum states quantifies the content of quantum correlations present in these states. Another way to quantify quantum correlation is the amount of entanglement possessed by these states. Comparing the results obtained here with the entanglement content \cite{cerf} of photon-added (-subtracted) TMSV states, we observe that the monotonic relationship of these quantities for pure  two-qubit states
in finite dimension is no longer true for pure states in the continuous variable case provided the Bell test is performed with pseudospin operators.

\section{Distributed photon addition and subtraction}
\label{sec:distribution}
In this section, we go beyond the realm of single mode operations, and analyze the effect of local operations on both the modes. Specifically, for a given number of photons to be added, instead of dumping them in a single mode, we distribute  them in two modes, and examine the effects of distribution on violations of Bell inequality based on pseudospin operators. Similar operations are considered in case of photon subtraction, which in this case, is different from photon addition. For addition, the correlation function of the state given in Eq. \eqref{eq:patmsv}, in terms of pseudospin operators, also takes the form as in Eq. \eqref{eq:correlation_fn_structure} with
%\small
\begin{eqnarray}
	\mathcal{K}_{(k,l)} &=& 2 \times \text{max} \Big[ \sum_{n=0}^\infty c_{2n}^{(k,l)}c_{2n+1}^{(k,l)}, \sum_{n=0}^\infty c_{2n+1}^{(k,l)}c_{2n+2}^{(k,l)}\Big],   \nonumber \\
	\label{eq:K_dis_add}
\end{eqnarray}%\normalsize
where $c_n^{(k,l)}$s are given in Eq. \eqref{eq:patmsv}. Note that the maximization in the above equation arises due to the optimization involved in $(q_1,q_2)$-duo. In the case of photon subtraction, we can rewrite the state given in Eq. \eqref{eq:pstmsv} as $|\psi_r^{(-k,-l)}\rangle = \sum_{n=0}^\infty c_{n+k}^{(-k,-l)}|n,n+k-l\rangle$. For this case,
%\small
\begin{eqnarray}
	\mathcal{K}_{(-k,-l)} &=& 2 \times \nonumber \\ \text{max} &\Big[&\sum_{n=0}^\infty c_{2n+k}^{(-k,-l)}c_{2n+1+k}^{(-k,-l)},\sum_{n=0}^\infty c_{2n+1+k}^{(-k,-l)}c_{2n+2+k}^{(-k,-l)} \Big ], \nonumber \\  
	\label{eq:K_dis_sub}
	\end{eqnarray} %\normalsize
	where $c_n^{(-k,-l)}$s are represented in Eq. \eqref{eq:pstmsv}. The corresponding maximal violation of Bell inequality for both addition and subtraction reads as
	\begin{eqnarray}
	\chi ^{max} ( |\psi_r^{(\pm k,\pm l)}\rangle &)& = 2\sqrt{1+ \mathcal{K}_{(\pm k,\pm l)}^2}.
	\label{eq:K_dis_bv}
\end{eqnarray}
Note that in general, $\mathcal{K}_{(k,l)} \neq \mathcal{K}_{(-k,-l)}$ and hence $\chi^{max}( |\psi_r^{( k, l)}\rangle)$ and $\chi^{max}( |\psi_r^{(-k, -l)}\rangle)$ are typically different. However, when $k=l$, we notice that $ |\psi_r^{( k, k)}\rangle$ and $ |\psi_r^{(-k, -k)}\rangle$ have the same Schmidt coefficients \cite{cerf}, and consequently, $\chi ^{max} ( |\psi_r^{( k, k)}\rangle) = \chi ^{max} ( |\psi_r^{(- k,- k)}\rangle)$. 
%This feature is also highlighted in Figs. \ref{fig:add_sub_ineq} and \ref{fig:add_sub_distri}. 
%The upshot of the above argument is that, different from the single mode case, there is a disparity in the maximal violation of Bell inequality for distributed photon added or subtracted states. We will further discuss about this inequivalence in the next subsection.
Therefore, when $k \neq l$ ($k \neq 0$), there exists a disparity in the maximal violation of Bell inequality for distributed photon-added and -subtracted states (see Figs. \ref{fig:add_sub_ineq} and \ref{fig:add_sub_distri}). We  discuss this inequivalence in this section.

%\textcolor{green}{How distribution effects the BV. When is distribution better: find ranges. For distributed operations, addition and subtraction are inequivalent. See Fig. \ref{fig:add_sub_ineq}. We can also find ranges for which addition(or subtraction) gives more Bell violation than the other.}
\begin{figure}[h]
\includegraphics[width=\linewidth]{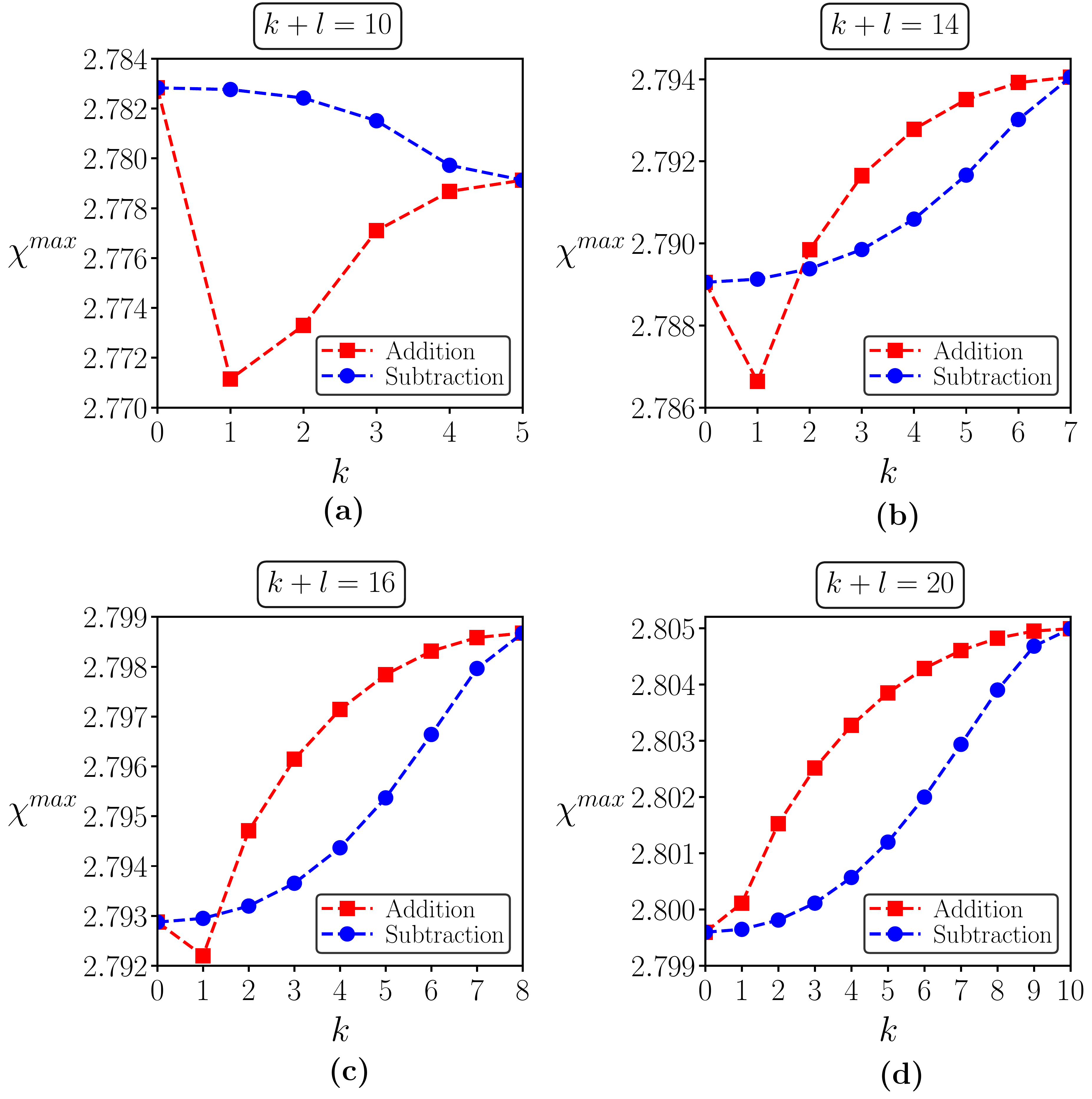}
\caption{Inequivalence of distributed addition and subtraction of photons. Here $r = 0.5$. The abcissa represents the number of photons added or subtracted from the first mode for a  given  total number of photons, $k+l$, where $k$ and $l$ represent the photons added (subtracted) in the first and the second modes respectively. Furthermore, it highlights a relationship between nonmonotonicity and the relative performance of distributed addition (subtraction) in terms of their maximal violation of Bell inequality. Both the axes are dimensionless.}
\label{fig:add_sub_ineq}
\end{figure}

\subsection{Inequivalence of addition and subtraction}
\label{sec:add_ineq}
%Both even and odd number addition gives qualitatively the same features. 
%1. low $r$ low photon number $\rightarrow$ non-monotonic, diminution; going to high $r$ or large photon numbers gives a transition to monotonic, BV enhancement phase.
%
%%\begin{figure*}
%%\includegraphics[width=\textwidth]{dis_add_even}
%%\caption{Distributed addition of even number of photons.}
%%\label{fig:dis_add_even}
%%\end{figure*}
%%
%%\begin{figure*}
%%\includegraphics[width=\textwidth]{dis_add_odd}
%%\caption{Distributed addition of odd number of photons.}
%%\label{fig:dis_add_odd}
%%\end{figure*}
%
%
%
%Shows similar features like addition. \textcolor{red}{\textbf{We can focus on how the non-monotonicity and diminution changes to monotonic enhancement of maximal Bell violation upon increasing the squeezing parameter or the total number of photons to be added or subtracted.}} 
%

In Sec. \ref{sec:single_mode_operations}, we argued that both addition and subtraction of photons from a single mode yields the same maximal violation. 
%So, for single mode operations, addition is equivalent to subtraction.
As pointed out earlier, this equivalence breaks down in case of distributed photon-addition and -subtraction (Figs. \ref{fig:add_sub_ineq} and \ref{fig:add_sub_distri}). This inequivalence prompts us a natural question.-- In terms of maximal violation under distribution, which one is better-- addition or subtraction? 
%\noindent $(1)$ which one is better: addition or subtraction? 
 %\noindent  $(2)$ can we have an intuitive insight of this inequivalence? 
\begin{figure}[t]
\includegraphics[width=\linewidth]{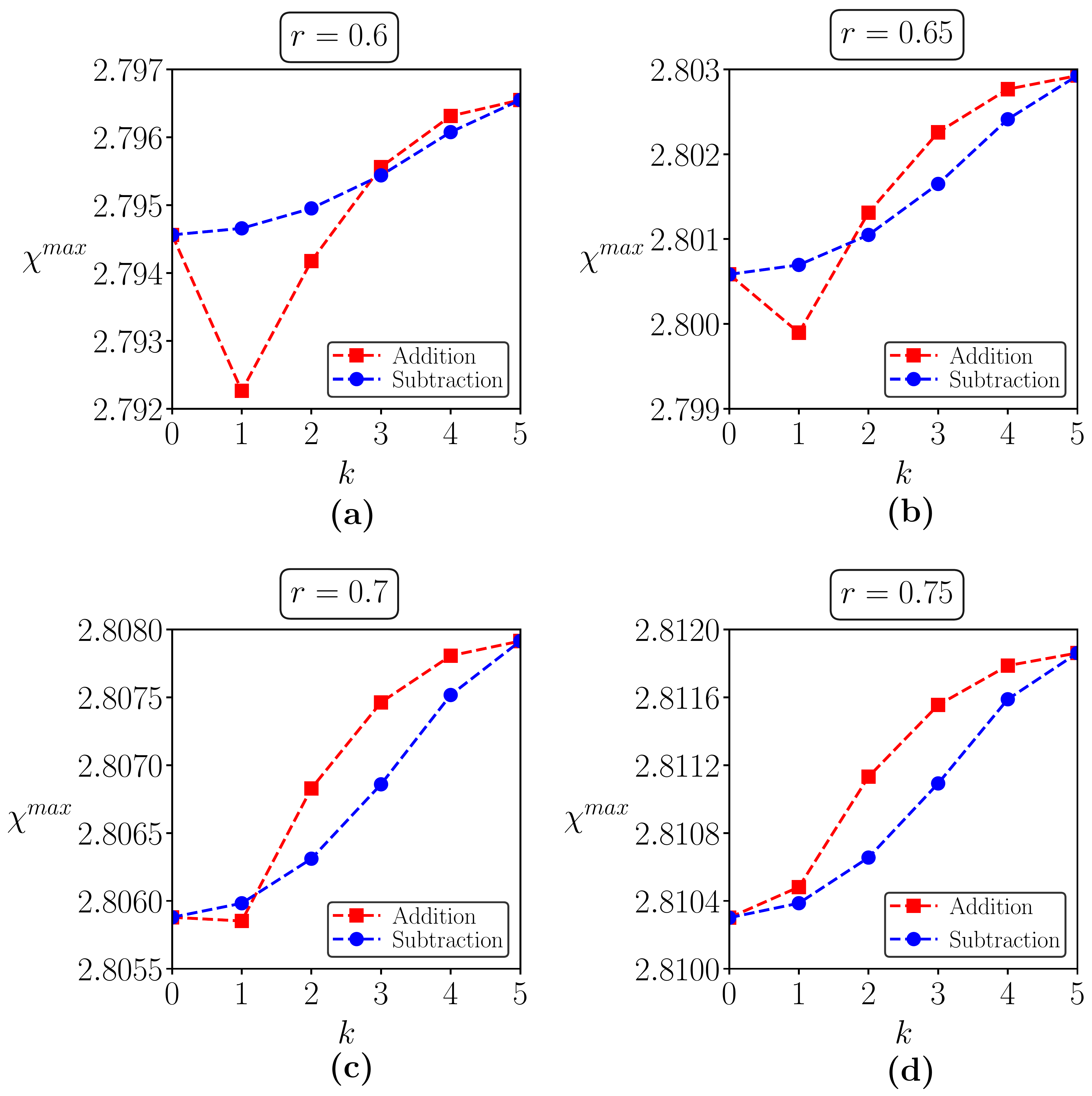}
\caption{Washing away of diminution and nonmonotonicity in the maximal violation with increasing squeezing parameter. The total number of photons added (subtracted) is always fixed to $10$. Here, the abcissa, $k$, is the added number of photons in the first mode. Other details are same as in Fig. \ref{fig:add_sub_ineq}.}
\label{fig:add_sub_distri}
\end{figure}

%\textcolor{red}{Comment on $\mathcal{K}_{(k,k)} = \mathcal{K}_{(-k,-k)}$}

%\subsubsection{addition vs subtraction}
To answer the above question, for a fixed squeezing pararmeter $r$, and for fixed total number of photons added (subtracted) in both the modes, $k+l$, with $k$ and $l$ being the added (subtracted) photons from the first and second modes respectively, we investigate the behavior of $\chi^{max}$ with respect to $k$.
%we plot $\chi^{max}$ with respect to the number of added or subtracted photons, $k$, to the first mode for different values of the total number of photons, $k+l$, to be added or subtracted from both the modes for a fixed value of squeezing parameter, $r=0.5$, in Fig. \ref{fig:add_sub_ineq}. We also plot the same for different values of $r$ with a fixed total number of photons, $k+l=10$ in Fig. \ref{fig:add_sub_distri}. 
Extensive numerical analysis reveals the following qualitative trends of $\chi^{max}$ under distributed operations (certain exemplary scenarios are depicted in Figs. \ref{fig:add_sub_ineq} and \ref{fig:add_sub_distri}).

\begin{enumerate}[label=\textbf{\arabic*.}]
\item In case of both addition and subtraction, for `low' values of squeezing parameter $r$ and the total number of added (subtracted) photons, $k+l$, we observe that the maximal violation usually decreases with the number of photons added in the first mode, $k$, which sometimes leads to the nonmonotonicity of $\chi^{max}$ against $k$. Moreover, in the distributed case, the value of $\chi^{max}$ occasionally turns out to be smaller compared to that of the single mode operations.
%\textcolor{red}{The decrement does not always `lead' to nonmonotonicity. It can be monotonic decrement.}
%\textcolor{blue}{In case of both addition and subtraction, for `low' values of squeezing parameter $r$ and the total number of added (subtracted) photons, $k+l$, the maximal violation, upon distribution display features like nonmonotonicity and diminution from the value obtained in the case of single mode operations. }

\item For `low' to `intermediate' values of $r$ and $k+l$, interestingly, we find that 
%\textcolor{red}{there exist regions in the parameter space of $r$ and $k+l$, where 
distributed subtraction gives more violation compared to distributed addition for some specific values of $r$ and $k+l$.
%\textcolor{blue}{a. subtraction is not uniquely better; b. single mode operation-er theke besi na hole addition -er theke besi hoar ki mane.} 

\item The traits of diminution and nonmonotonicity get completely washed away to monotonic enhancement of maximal violation for `sufficiently high' $r$ or by `increasing' the total number of added or subtracted photons $k+l$ or both. In this parameter regime, distributed addition typically yields a higher violation compared to distributed subtraction.

\item For distributed addition, the transition from nonmonotonicity and diminution to monotonic enhancement of maximal violation usually requires `higher' values of squeezing, $r$, or, total number of photons, $k+l$, compared to the distributed subtraction case.
\end{enumerate}

%Numerical analysis reveals that upon distribution, qualititavely,  in case of both addition and subtraction, for low values of squeezing parameter $r$ and the total number of added (subtracted) photons ($k+l$), the maximal violation display features like nonmonotonicity and diminution from the value obtained in the case of single mode operations. These traits of diminution and nonomonotonicity gets washed away to monotonic enhancement of maximal violation for sufficiently high squeezing $r$ or by increasing the total number of added or subtracted photons $k+l$ or both. We highlight this general trend for the specific case of $r=0.5$ and increasing total number of photons ($k+l$) in   Fig. \ref{fig:add_sub_ineq} and fixing total number of photons to $10$ and increasing values of $r$ in Fig. \ref{fig:add_sub_distri} respectively.

The observations are in sharp contrast to the results obtained in the case of entanglement \cite{cerf}, where distribution always leads to monotonic enhancement of entanglement for both addition and subtraction. Furthermore, distributed addition is shown to ubiquitously outperform distributed subtraction in terms of the entanglement content (cf. \cite{tamo_4mode}). As argued above, this is no more true in case of violation of Bell inequality.
%However, when we consider the performance of distributed photon  addition or subtraction in terms of maximal violation of Bell inequality using generalized pseudospin operators, we find there exist regions in the parameter space (low values of $r$ and $k+l$) where subtraction outperforms addition. 
Moreover careful survey in the space of squeezing parameter and total number of photons added or subtracted indicate that the outcome of this duel (addition vs. subtraction) has a one to one correspondence with monotonicity of maximal violation upon distribution of the added or subtracted photons. The general trend being when maximal violation for distributed addition shows nonmonotonicity or diminution, subtraction prevails, which as pointed out earlier occurs for `low' to `intermediate' values of $r$ and $k+l$. 
%For sufficiently large values of these parameters, distributed addition gives higher values of maximal violation of Bell inequality.  %See Figs. \ref{fig:add_sub_ineq} and \ref{fig:add_sub_distri} for some illustrative examples. 

%\begin{figure}[h]
%\includegraphics[width=\linewidth]{add_sub2}
%\caption{Effect of monotonicity on distributed addition and subtraction of photons.}
%\label{fig:add_sub_mono_bv}
%\end{figure}

%\subsection{Distributed addition and subtraction}
%Unlike distributed addition, there exists an even odd dichotomy for distributed photon subtraction.
%
%\subsubsection{Even distributed subtraction}
%We find: 1. generic monotonicity 2. generic enhancement of maximal bell violation
%
%\subsubsection{Odd distributed subtraction}
%We find: 1. generic monotonicity 2. transition from diminution to enhancement of maximal bell violation wrt (a) increasing photon number (b) increasing squeezing parameter $r$. 

%\begin{figure*}
%\includegraphics[width=\textwidth]{dis_sub_even}
%\caption{Distributed subtraction of even number of photons.}
%\label{fig:dis_sub_even}
%\end{figure*}
%
%\begin{figure*}
%\includegraphics[width=\textwidth]{dis_sub_odd}
%\caption{Distributed subtraction of odd number of photons.}
%\label{fig:dis_sub_odd}
%\end{figure*}

\section{Violations of Bell inequality in Realistic Situations}
\label{sec:realistic}

The cases considered so far are ideal, as the TMSV states  were not reckoned to be tampered  by any noise due to environmental interactions and the twin beam generator was assumed to be without any imperfections. However, in laboratories, presence of noise and faulty machines are generic  \cite{noise1}. In this section, we address these issues, and focus on imperfect (noisy and faulty) scenarios which reduce the maximal violation of Bell inequality,  and in some cases, even makes the system non-violating. We show how even single mode operations, namely addition or subtraction of photons can enhance  violation of Bell inequality in these scenarios, and sometimes can even \emph{activate} violation for states which ceased to violate Bell inequalities in presence of noise or imperfections. 

Here we consider two major sources of imperfections that can have detrimental effect on the maximal violation of Bell inequality. {\bf 1.} We consider the case of a general local noise model, and examine its effect on the  violation of Bell inequality for TMSV states. We then  analyze
%.  (thermal, Gaussian etc.), identify system and noise parameter ranges and discuss
 enhancement and/or activation of the violation via photon addition or subtraction, giving examples for specific cases of local thermal and Gaussian noise. We also repeat the same analysis for a classically correlated noise model. {\bf 2.}  We assume that there is a faulty  faulty twin beam generator, resulting a TMSV state with squeezing, different than the desired one and perform the same investigations like effects on violation on local realism due to states with defects as in the case of noisy states. {\bf 3.} We 
consider the situation where the photon addition and subtraction procedures are themselves faulty due to features like dark counts \cite{noise_dark,nd2} etc. of the photodetectors employed during the photon addition and subtraction procedures.
 
%  incorporate the effects of  during the photon addition or subtraction procedure and compute it's effect on violation of Bell inequality.

%\textcolor{red}{Noise vs fault: noise means the preparation procedure gives perfect states but environmental couplings tampers the state making it mixed and noisy. We refer to a fault when the preparation procedure is itself faulty.}

\subsection{Noise in states}
\label{sec:noise}

We now look at the TMSV states, tampered by noise, and study its robustness against such mixing in terms of its ability to violate the Bell inequality based on the pseudospin operators. The violation is computed in two distinct scenarios: \textit{(i)} when the probability with which the noise gets mixed with the TMSV state is known and, \textit{(ii)} when the information about the mixing probability is absent. In the first case, for a given $p$, the maximal violation of Bell inequality is evaluated while in the second one, the settings chosen for  optimizing the violation of Bell inequality is same as the one with vanishing $p$. In both the cases, we analyze the effects of photon addition and subtraction on the violation of Bell inequality.
% violation in both these two cases.

\subsection*{Local noise}
We consider a general local noise model, where  the noisy state reads as
\begin{eqnarray}
\rho = (1-p)|\psi_r\rangle \langle \psi_r| + p \Big( \sum_{n=0}^{\infty} \mu_n |n \rangle \langle n| \otimes  \sum_{m=0}^{\infty} \nu_m |m \rangle \langle m| \Big), \nonumber \\
\label{eq:noise_local}
\end{eqnarray}
where $0\leq p \leq 1$, $\ket{\psi_r}$ is the TMSV state with squeezing parameter $r$, and
 $\sum_{n=0}^{\infty} \mu_n = \sum_{m=0}^{\infty} \nu_m = 1$.
The correlation function for $\rho$, following Eq. \eqref{eq:correlation_fn}, in terms of the pseudospin operators is given by
\begin{eqnarray}
E(\theta_a,\theta_b) = A\big(\cos \theta_a \cos \theta_b + \frac{B}{A}\sin \theta_a \sin \theta_b \big),
\end{eqnarray}
with
\begin{eqnarray}
A &=& (1-p) + p\Big(\sum_{n=0}^{\infty}(-1)^n \mu_n \Big)\Big( \sum_{m=0}^{\infty} (-1)^m\nu_m \Big),\nonumber \\
B &=& (1-p)\tanh 2r. 
\label{eq:noise_A,B}
\end{eqnarray}
In practical situations, the knowledge of $p$, i.e., whether any error have acted or not, may be elusive. Therefore, two situations may arise: \emph{(i)} when the value of $p$ is known, and \emph{(ii)} when it is unknown.
The maximum value of the Bell expression for the state $\rho$, when the mixing probability, $p$, is known, is given by (see Eqs. \eqref{eq:optimal_setting} and \eqref{eq:max_bv})
\begin{eqnarray}
 \chi^{max}_{p}(\rho) =  2\sqrt{A^2 +B^2}.
 \label{eq:noise_bv_p_known}
\end{eqnarray} 
When the knowledge about $p$ is absent, 
one might proceed with the optimal measurement setup for the TMSV state, $|\psi_r \rangle$,  and calculate the violation.
 Bell expression for such a setting of the state given in Eq. \eqref{eq:noise_local} reads as
\begin{eqnarray}
 \chi^{max}_{\bcancel{p}}(\rho) = 2\Big( \frac{A+\mathcal{K}_{(0,0)}B}{\sqrt{1 + \mathcal{K}^2_{(0,0)}}}\Big) = 2 \Big(\frac{A + B \tanh 2r}{\sqrt{1+\tanh^2 2r}}\Big). \nonumber \\
\label{eq:noise_bv_p_unknown}
\end{eqnarray}
When the value of $p$ is known, we have $\chi^{max}_{p}(\rho)>2$, when
\begin{eqnarray}
p < 1 - \frac{1}{a^2+b^2}\Big( a(a-1) + \sqrt{a(a - ab^2 + 2b^2)} \Big),
\label{eq:noise_p_known_critical}
\end{eqnarray}
where $a = 1 - \Big(\sum_{n=0}^{\infty}(-1)^n \mu_n \Big)\Big( \sum_{m=0}^{\infty} (-1)^m\nu_m \Big)$ and $b = \tanh 2r$. When the knowledge about the value of $p$ is absent, then $\chi^{max}_{\bcancel p}(\rho)>2$, if
\begin{eqnarray}
p <  \frac{\sqrt{1+b^2}(\sqrt{1+b^2}-1)}{2+b^2 - a}. 
\label{eq:noise_p_cri_unknown}
\end{eqnarray}

We now explore the possibilities of enhancement and/or activation of the violation via addition (subtraction) of photons to one of the modes of such  noisy states. We assume, without any loss of generality, that the single mode operations are performed in the first mode. The normalized state when $k$ photons are added in the first mode of $\rho$, given in Eq. \eqref{eq:noise_local}, can be represented as
  \begin{eqnarray}
\tilde{\rho}_k &=& (1-p)|\psi_r^{(k,0)} \rangle \langle \psi_r^{(k,0)} | \nonumber \\
&+& p \Big( \sum_{n=0}^{\infty} \tilde{\mu}_n^{k} |n+k \rangle \langle n+k| \otimes  \sum_{m=0}^{\infty} \nu_m |m \rangle \langle m| \Big). 
\label{eq:noise_local_add_k}
\end{eqnarray}
where 
\begin{eqnarray}
\tilde{\mu}_n^{k} = \frac{\mu_n \binom{n+k}{k}}{\sum_{t=0}^{\infty}\mu_t \binom{t+k}{k}}.
\end{eqnarray}
 When $k$ photons are subtracted from $\rho$, we have 
 \begin{eqnarray}
\tilde{\rho}_{-k} &=& (1-p)|\psi_r^{(-k,0)} \rangle \langle \psi_r^{(-k,0)} | \nonumber \\
&+& p \Big( \sum_{n=0}^{\infty} \tilde{\mu}_n^{-k} |n \rangle \langle n| \otimes  \sum_{m=0}^{\infty} \nu_m |m \rangle \langle m| \Big), \nonumber \\
&=& (1-p)|\psi_r^{(0,k)} \rangle \langle \psi_r^{(0,k)} | \nonumber \\
&+& p \Big( \sum_{n=0}^{\infty} \tilde{\mu}_n^{-k} |n \rangle \langle n| \otimes  \sum_{m=0}^{\infty} \nu_m |m \rangle \langle m| \Big).
\label{eq:noise_local_sub_k}
\end{eqnarray}
with 
\begin{eqnarray}
\tilde{\mu}_n^{-k} = \frac{\mu_{n+k} \binom{n+k}{k}}{\sum_{t=0}^{\infty}\mu_{t+k} \binom{t+k}{k}}.
\end{eqnarray}
Here, the forms of $\ket{\psi^{\pm k, \pm l}}$ are given in Eqs. \eqref{eq:added_state} and \eqref{eq:sub_state}.
 The correlation functions corresponding to states in Eqs. \eqref{eq:noise_local_add_k} and \eqref{eq:noise_local_sub_k} have the same structure as Eq. \eqref{eq:noise_A,B},
and the corresponding maximal Bell inequality violation, when $p$ is known, is given by
\begin{eqnarray}
\chi^{max}_p (\tilde{\rho}_{\pm k})= 2 \sqrt{A_{\pm k}^2 + B_{\pm k}^2}.
\end{eqnarray}
For addition of photons in the first mode, with the optimal $(q_1,q_2)$-pair, $A_{+k}$ and $B_{+k}$ takes the following form
\begin{eqnarray}
A_{+k} &=& (1-p) + p\Big(\sum_{n=0}^{\infty}(-1)^n \tilde{\mu}_n^k \Big)\Big( \sum_{m=0}^{\infty} (-1)^m\nu_m \Big),\nonumber \\
B_{+k} &=& (1-p) \mathcal{K}_{(k,0)}, 
\label{eq:noise_A,B_add_k}
\end{eqnarray}
and in case of photon subtraction,
\begin{eqnarray}
A_{-k} &=& (1-p) + (-1)^k p\Bigg[ \Big(\sum_{n=0}^{\infty}(-1)^n \tilde{\mu}_n^{-k} \Big)  \nonumber \\
&& \Big( \sum_{m=k \text{ mod } 2}^{\infty}  (-1)^m\nu_m \Big) \Bigg], \nonumber \\
 B_{-k} &=& (1-p) \mathcal{K}_{(-k,0)}.
\label{eq:noise_A,B_sub_k}
\end{eqnarray}
When the knowledge about $p$ is absent, the measurement settings which are optimal for the $k$ photon-added TMSV state are employed. The maximal  Bell expression for such a setting for the photon-added and -subtracted noisy state, $\tilde{\rho}_{\pm k}$, is as follows:
\begin{eqnarray}
\chi^{max}_{\bcancel{p}}(\tilde{\rho}_{\pm k}) &=& 2\Big( \frac{A_{\pm k}+\mathcal{K}_{(\pm k,0)}B_{\pm k}}{\sqrt{1 + \mathcal{K}^2_{(\pm k,0)}}}\Big)  \nonumber \\
&=& 2\Big( \frac{A_{\pm k}+\mathcal{K}_{(k,0)}B_{\pm k}}{\sqrt{1 + \mathcal{K}^2_{(k,0)}}}\Big).
\label{eq:noise_bv_p_unknown_add_k}
\end{eqnarray}
Note that unlike in the noiseless scenario, in the presence of local noise, the maximal violation for single mode addition and subtraction are structurally different (see Eqs. \eqref{eq:noise_A,B_add_k} and \eqref{eq:noise_A,B_sub_k}). 
%We will quantify this dissimilarity by taking specific examples of local noise models in the succeeding subsection.}
We now consider two special cases where the local noises considered in Eq. \eqref{eq:noise_local} are thermal  and  Gaussian. In both these cases, the system ceases to violate the Bell inequality after a critical value of $p$, even when the value of $p$ is known.

% However, in case of the noise considered as in Eq. \eqref{eq:noise1}, the knowledge of $p$ guarantees that the state remains Bell violating for all $p$.

\subsubsection{Local thermal noise}
\label{sec:local_th_noise}

Let us first consider the scenario of local thermal noise. In this situation, the resulting state, $\rho^{\beta_1\beta_2}$, is the admixture of TMSV state with the thermal noise having   inverse temperatures, $\beta_1= \frac{1}{k_B T_1}$ and $\beta_2 = \frac{1}{k_B T_2}$, for first and second modes respectively with $k_B$ being the Boltzman constant and $t_i$, $i=1,2$ being the temperature of the $i^{th}$ mode. The local thermal noise parameters are given by 
\begin{eqnarray}
\mu_n = (1 - e^{-\beta_1})e^{-\beta_1 n}, \nonumber \\  \nu_m = (1 - e^{-\beta_2})e^{-\beta_2 m}.
\end{eqnarray}
 For these choices of noise parameters, we obtain 
\begin{eqnarray}
A=(1-p) + p\tanh \frac{\beta_1}{2} \tanh \frac{\beta_2}{2},
\end{eqnarray} 
  while $B$ remains the same as in Eq. \eqref{eq:noise_A,B}. The maximal violation, when the value of $p$ is known, reduces to (see Eq. \eqref{eq:noise_bv_p_known})
  \small
\begin{eqnarray}
&&\chi_{p}^{max}(\rho^{\beta_1 \beta_2}) = 2 \times \nonumber \\ 
&& \sqrt{(1-p)^2 \tanh^2 2r + \Big\lbrace(1-p) + p\tanh \frac{\beta_1}{2} \tanh \frac{\beta_2}{2}\Big\rbrace^2}. \nonumber \\
\label{eq:noise_thermal_BV_gen}
\end{eqnarray} \normalsize
%Here, $\rho^{\beta_1 \beta_2}$ denotes the state, as given in Eq. \eqref{eq:noise_local}, with the local noises being thermal states with inverse temperatures $\beta_1$ and $\beta_2$ for the first and the second mode respectively. 
The range of mixing probability, $p$, for which $\rho^{\beta_1 \beta_2}$ violates Bell inequality, is given in Eq. \eqref{eq:noise_p_known_critical},
with $a$ and $b$ are now as follows:
\begin{eqnarray}
a &=& 1 - \tanh \frac{\beta_1}{2} \tanh \frac{\beta_2}{2}, \nonumber \\ b &=& \tanh 2r. 
\label{eq:noise_p_known_critical_thermal_a_b}
\end{eqnarray}
%\textcolor{red}{We can give some surface plots to illustrate regions of non BV. These is will be the region of interest while we consider activation}.
% The next objective is to compute the critical value of $p$, above which $\rho_{th}$ ceases to be Bell violating. We further want to investigate how temperature effects the Bell violation. Contrast wrt the previous noise model.
When the value of $p$ is unknown, following Eq. \eqref{eq:noise_bv_p_unknown}, the violation is given by
\begin{eqnarray}
\chi_{\bcancel{p}}^{max}(\rho^{\beta_1 \beta_2}) &=& \frac{2}{\sqrt{1+\tanh^2 2r}}\Big[(1-p) (1+\tanh^2 2r) \nonumber \\ 
&+& p  \tanh \frac{\beta_1}{2} \tanh \frac{\beta_2}{2}\Big].
\label{eq:noise_thermal_BV_p_unknown}
\end{eqnarray}
Using Eqs. \eqref{eq:noise_p_cri_unknown} and \eqref{eq:noise_p_known_critical_thermal_a_b}, we have $\chi_{\bcancel{p}}^{max}(\rho^{\beta_1 \beta_2}) > 2$ when
\begin{eqnarray}
p < \frac{\sqrt{1+\tanh^2 2r}(\sqrt{1+\tanh^2 2r}-1)}{1+\tanh^2 2r - \tanh \frac{\beta_1}{2} \tanh \frac{\beta_2}{2}}.
\label{eq:noise_p_cri_unknown_ther}
\end{eqnarray}
Clearly, the parameter space, in which violation of Bell inequality occurs in the $p$-unknown scenario, is smaller compared to the case when $p$ is known. For example, if $\beta_1,\beta_2$, and $r$ are taken to be $3,5$ and $1.25$ respectively, we obtain violation for $p < 0.633$ when $p$ is known, and for $p < 0.526$ with $p$ being unknown.
%We subsequently analyze the situation of \emph{activation} for both the cases ($p$ known and unknown). 
The distinction becomes more pronounced in the low temperature limit of the noise. When $\beta_1,\beta_2 \rightarrow \infty$, knowledge of $p$ guarantees that the state keeps violating Bell inequality for all values of $p < 1$. On the contrary, when the knowledge of the value of $p$ is absent, the state in the above limit violates Bell inequality only when $ p < \frac{\sqrt{1+\tanh^2 2r}(\sqrt{1+\tanh^2 2r}-1)}{\tanh^ 2r}$. For the EPR state, this bound reduces to $p< 2 - \sqrt{2}$. On the other hand, in the high temperature limit $(\beta_1,\beta_2 \rightarrow 0)$, the violation becomes insensitive to the knowledge of $p$, and in both the cases, the state violates Bell inequality for $p < 1 - 1/{\sqrt{1+\tanh^2 2r}}$, which reduces to $p < 1 - 1/{\sqrt{2}}$ for the EPR state. This is reminiscent of the result involving continuous variable Werner state in \cite{cv_werner}. Note here that although the properties of states with known $p$ have been studied before, the situation when $p$ is unknown, although very relevant has hardly been investigated before.

%When photons are added to these states, $A_{k}$ in Eq. \eqref{eq:noise_A,B_add_k} becomes $(1-p) + p\tanh^{k+1} \frac{\beta_1}{2} \tanh \frac{\beta_2}{2}$, while $B_{k}$ is same as given in Eq. \eqref{eq:noise_A,B_add_k}. When even number of photons are subtracted, we have $A_{(-k,+)} = A_{k}$ and $B_{-k} = B_{k}$. However, for the case of subtracting odd number of photons, $B_{-k}$ remains same but $A_{(-k,-)} =(1-p) - p \lbrace \tanh^{k+1} \frac{\beta_1}{2} - (1-e^{-\beta_1})^{k+1}\rbrace \tanh \frac{\beta_2}{2}$. 

\begin{figure}[h]
\includegraphics[width=0.6\linewidth]{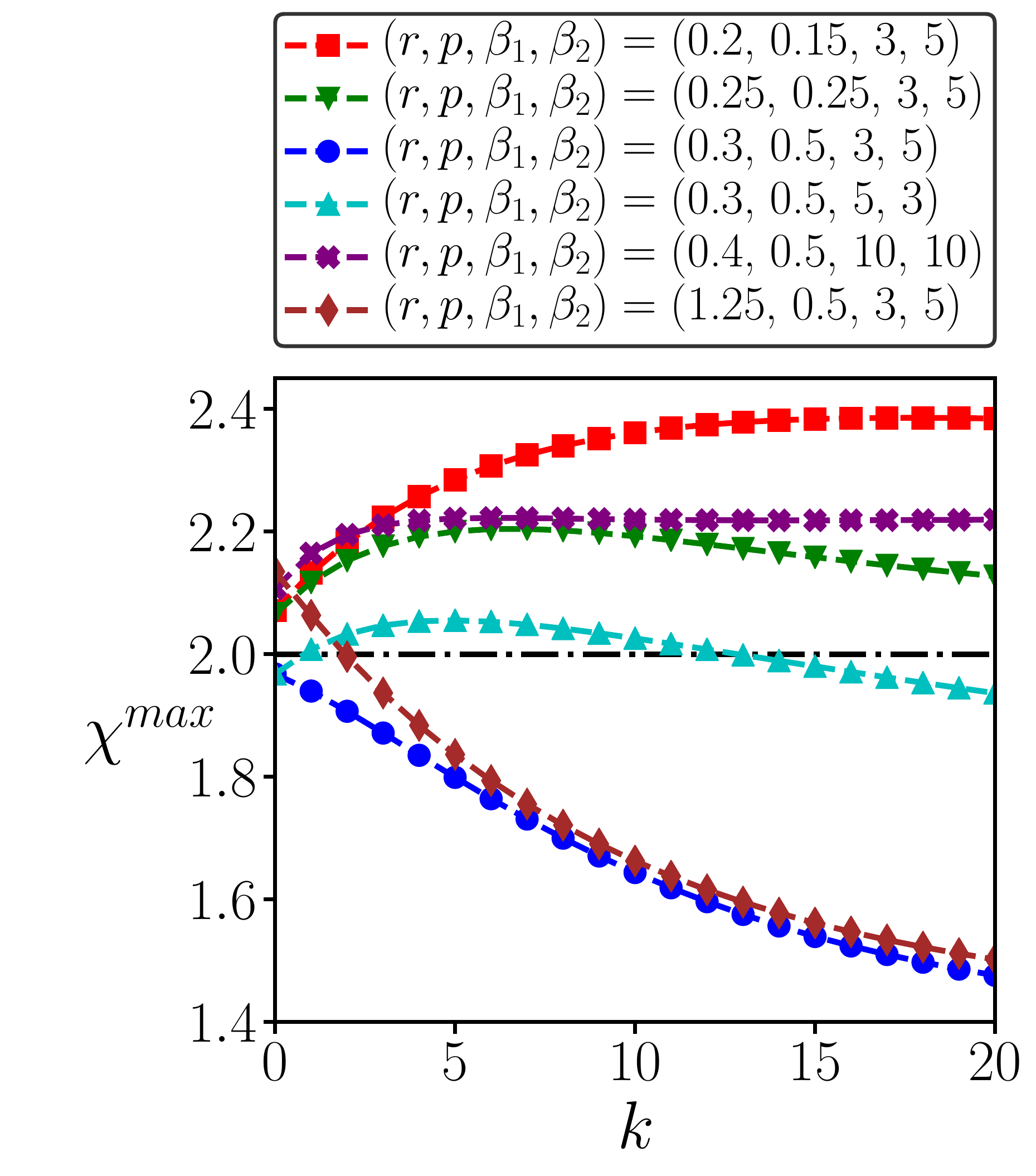}
\caption{Variation of the maximal violation of Bell inequality for photon-added TMSV states mixed with local thermal noise against number of added photons $k$. We choose different values of ($r,p, \beta_1,\beta_2$) to make the observation more prominent. Both axes are dimensionless.}
\label{fig:noise_ther_add}
\end{figure}

Let us first analyze how the photon addition and subtraction process effect the violation of Bell inequality when $p$ is known. 
When $k$ photons are added to these states, $A_{+k}$ in Eq. \eqref{eq:noise_A,B_add_k} becomes 
\begin{eqnarray}
(1-p) + p\tanh^{k+1} \frac{\beta_1}{2} \tanh \frac{\beta_2}{2},
\end{eqnarray}
 while $B_{+k}$ is same as given in Eq. \eqref{eq:noise_A,B_add_k}. 
When an even number of photons are subtracted, $k$, we have $A_{-k} = A_{+k}$ and $B_{-k} = B_{+k}$, leading to the same $\chi^{max}$ for photon-added and -subtracted states. However, for the case of subtracting an odd number of photons, $B_{-k}$ remains same but $A_{-k} =(1-p) - p \tanh^{k+1} \frac{\beta_1}{2} \lbrace \tanh \frac{\beta_2}{2} - (1-e^{-\beta_2})\rbrace$. The above expressions clearly indicate that in the presence of local thermal noise, addition and subtraction of photons are equivalent. However, when odd number of photons are involved, addition performs better than subtraction in terms of the maximal violation. Therefore, we restrict ourselves to single mode operations only involving photon addition. Nevertheless, similar analysis can also be caried out for photon subtraction. 

Note that when photons are added to 
%TMSV mixed with local thermal noise
$\rho^{\beta_1\beta_2}$, $B_k$ $\big( \sim \mathcal{K}_{(k,0)} \big)$ shows overall enhancement,  while $A_k$ monotonically decreases following the decrement of the term $\tanh^{k+1} \frac{\beta_1}{2}$ with $k$ in its expression. Therefore, the maximal value of Bell expression, $\chi^{max}_p = 2\sqrt{A_k^2 + B_k^2}$, is not guaranteed to increase after adding photons, and is determined by the competing enhancement and decrement of $B_k$ and $A_K$ respectively. In Fig. \ref{fig:noise_ther_add}, we plot $\chi^{max}_p$ for various values of system parameters for known $p$, which encapsulates the following patterns:

\begin{enumerate}[label=\textbf{\arabic*.}]
\item For low values of $p$, the noisy state is essentially close to the TMSV state, and therefore, we get enhancement in the Bell expression on addition of photons (see the curve with $(r, p, \beta_1, \beta_2) = (0.2, 0.15, 3, 5)$ in Fig. \ref{fig:noise_ther_add}). 

\item For low values of the squeezing parameter, $r$, the overall gain $\mathcal{G}(\ket{\psi_r^{(k,0)}})$ on addition of photons to the TMSV state is large (see Table. \ref{table:gain1}). Now, in the low to intermediate temperature regime, when $p$ is small, the increase of $\mathcal{K}_{(k,0)}$ (due to the high gain) dominates, and therefore the overall violation `increases', and ultimately saturates for high values of $k$.

\item There exists regions in the parameter space, where we can have `activation' of violation of Bell inequality, i.e., the state which is originally non-Bell violating, violates local realism after adding $k$ photons. See the plot with $(r,p,\beta_1,\beta_2) = (0.3,0.5,5,3)$ in Fig. \ref{fig:noise_ther_add}. Also note that, in this situation, the value of the  Bell expression initially increases with the number of added photons, but it starts decreasing after sometime, as the decrement of the term $\tanh^{k+1}\frac{\beta_1}{2}$ in $A_k$ becomes dominating.

\item For high values of the squeezing parameter, $r$, the value of  $\mathcal{K}_{(k,0)}$ does not change substantially. It is reflected in the low gain percentages for highly squeezed TMSV states in Table. \ref{table:gain1}. Therefore, the Bell expression decreases monotonically with $k$. Similarly, for high values of $p$, the Bell expression can decrease, as $\tanh^{k+1}\frac{\beta_1}{2}$ in $A_k$ dominates.
 \end{enumerate}

 For the case of unknown $p$, we observe qualitatively the same features as in the scenario for which $p$ is known but with reduced values of the maximal violation.

\subsubsection{Local Gaussian noise}

%Now, we consider the case of local Gaussian noise mixed with the TMSV state, and calculate its effect on the violation of Bell inequality. The coefficients of Gaussian noise are given by
We now admix the TMSV state with local Gaussian noise, denoted by $\rho^{\sigma_1\sigma_2}$ having coefficients 
\begin{eqnarray}
\mu_n &=& \frac{2}{1+\vartheta_3(0,e^{-\sigma_1^{-2}})}e^{-n^2/\sigma_1^2}, \nonumber \\
\nu_n &=& \frac{2}{1+\vartheta_3(0,e^{-\sigma_2^{-2}})}e^{-n^2/\sigma_2^2}
\end{eqnarray}
where $\sigma_1$ and $\sigma_2$ are the relevant noise parameters, and $\vartheta_n$ denotes the Jacobi theta function of order $n$ \cite{theta}. In this case, 
\small
\begin{eqnarray}
 A = (1-p) + p \frac{1+\vartheta_4(0,e^{-\sigma_1^{-2}})}{1+\vartheta_3(0,e^{-\sigma_1^{-2}})} \times \frac{1+\vartheta_4(0,e^{-\sigma_2^{-2}})}{1+\vartheta_3(0,e^{-\sigma_2^{-2}})},
\end{eqnarray}\normalsize
and $B$ remains the same as in Eq. \eqref{eq:noise_A,B}.

Like in the case of thermal noise, for a given $p$, the maximal violation of Bell inequality  using Eq. \eqref{eq:noise_bv_p_known}, takes the form as 
\begin{widetext}
\begin{eqnarray}
\chi^{max}_p(\rho^{\sigma_1 \sigma_2}) = 2\sqrt{(1-p)^2 \tanh^2 2r + \Big\lbrace(1-p) + p\times\frac{1+\vartheta_4(0,e^{-\sigma_1^{-2}})}{1+\vartheta_3(0,e^{-\sigma_1^{-2}})}\times\frac{1+\vartheta_4(0,e^{-\sigma_2^{-2}})}{1+\vartheta_3(0,e^{-\sigma_2^{-2}})}\Big\rbrace^2}.
\label{eq:noise_gaussian_BV_gen}
\end{eqnarray}
\end{widetext}
%Here $\rho^{\sigma_1 \sigma_2}$ denotes the state as given in Eq. \eqref{eq:noise_local} with the Gaussian local noises. Now, the violating region of $\chi^{max}_p(\rho^{\sigma_1 \sigma_2})$ is also given by Eq. \eqref{eq:noise_p_known_critical_thermal_a_b}. In this case, $a$ and $b$ are given by
%\begin{eqnarray}
%a &=& 1 - \frac{1+\vartheta_4(0,e^{-\sigma_1^{-2}})}{1+\vartheta_3(0,e^{-\sigma_1^{-2}})} \times \frac{1+\vartheta_4(0,e^{-\sigma_2^{-2}})}{1+\vartheta_3(0,e^{-\sigma_2^{-2}})},  \nonumber \\
%b &=& \tanh 2r.
%\label{eq:noise_p_known_critical_gaussian_a_b}
%\end{eqnarray}
In  case of local Gaussian noise, when $p$ is unknown, the violation, following Eq. \eqref{eq:noise_bv_p_unknown} reads as
\begin{eqnarray}
\chi^{max}_{\bcancel{p}}(&\rho^{\sigma_1 \sigma_2}) = \frac{2}{\sqrt{1+\tanh^2 2r}}\Big[(1-p) (1+\tanh^2 2r) \nonumber \\ 
&+ p  \frac{1+\vartheta_4(0,e^{-\sigma_1^{-2}})}{1+\vartheta_3(0,e^{-\sigma_1^{-2}})}\times \frac{1+\vartheta_4(0,e^{-\sigma_2^{-2}})}{1+\vartheta_3(0,e^{-\sigma_2^{-2}})}\Big].
\label{eq:noise_Gaussian_BV_p_unknown}
\end{eqnarray}
In such a situation, $\chi^{max}_{\bcancel{p}}(\rho^{\sigma_1 \sigma_2}) > 2$ holds for
\begin{eqnarray}
p < \frac{\sqrt{1+\tanh^2 2r}(\sqrt{1+\tanh^2 2r}-1)}{1+\tanh^2 2r - \frac{1+\vartheta_4(0,e^{-\sigma_1^{-2}})}{1+\vartheta_3(0,e^{-\sigma_1^{-2}})} \frac{1+\vartheta_4(0,e^{-\sigma_2^{-2}})}{1+\vartheta_3(0,e^{-\sigma_2^{-2}})}}.
\label{eq:noise_p_cri_unknown_gauss}
\end{eqnarray}

%In the case of local Gaussian noise, after adding or subtracting photons the Bell expression, specifically the $A^{add/sub}$ in Eqs. \eqref{eq:noise_A,B_add_k} and \eqref{eq:noise_A,B_sub_k} respectively, cannot be computed analytically. So, we compute them numerically for some specific choices of system parameters. \textcolor{red}{One such example is for $\sigma_1$= ....} 

\subsection*{Classically correlated noise}
Instead of uncorrelated noise considered in Eq. \eqref{eq:noise_local}, we now move to classically correlated local noise model, and examine Bell inequality violations for these states. Such a state can be represented as
\begin{eqnarray}
\tilde{\rho} = (1-p)|\psi_r\rangle \langle \psi_r| + p\sum_{n=0}^{\infty} C_n |n,n \rangle \langle n,n|. 
\label{eq:noise1}
\end{eqnarray}
The correlation functions for $\tilde{\rho}$, from Eq. \eqref{eq:correlation_fn}, in terms of the pseudospin operators are given by
\begin{eqnarray}
E(\theta_a, \theta_b) =  \cos \theta_a \cos \theta_b + \tilde{\mathcal{K}}_{(0,0)}\sin \theta_a \sin \theta_b,
\end{eqnarray}
with $\tilde{\mathcal{K}}=(1-p)\mathcal{K}_{(0,0)}=(1-p)\tanh 2r$. 
%The Bell expression constructed out of the correlation functions is given by
%\begin{eqnarray}
%\label{eq:bell_expression}
%\chi = E(\theta_a,\theta_b)+E(\theta_a,\theta_b')+E(\theta_a',\theta_b)-E(\theta_a',\theta_b').
%\end{eqnarray}
For known $p$,  ${\chi}^{max}_{{p}}(\tilde{\rho}) = 2\sqrt{1 + \tilde{\mathcal{K}}^2}$. In this situation, it is easy to see that $\chi^{max}_p > 2$ for any values of $p < 1$, and for any finite values of the squeezing parameter, $r$.

On the other hand, the maximal Bell expression takes the form
%If the value of $p$ is unknown, from Eqs. \eqref{eq:optimal_setting} and \eqref{eq:max_bv}, the Bell expression is given by
\small
\begin{eqnarray}
{\chi}^{max}_{\bcancel{p}}(\tilde{\rho}) = 2\Big( \frac{1+\mathcal{K}_{(0,0)}\tilde{\mathcal{K}}_{(0,0)}}{\sqrt{1 + \mathcal{K}^2_{(0,0)}}}\Big)
 = 2 \Big( \frac{1+(1-p)\tanh^2 2r}{\sqrt{1 + \tanh^2 2r}}\Big). \nonumber \\
\label{eq:noise2}
\end{eqnarray} \normalsize
with unknown $p$. Under this assumption about the uncertainty in the error estimation/detection, we observe criticalities in the values of $r$ and $p$, beyond which the system ceases to violate the Bell inequality based on pseudospin operators. 
For $r\rightarrow 0$, we find that ${\chi}^{max}_{\bcancel{p}}(\tilde{\rho}) \leq 2$ for $p \geq 1/2$. Therefore, for $p < 1/2$, the noisy state violates the Bell inequality for any finite squeezing, even when the value of $p$ is not known. However, if $p > 1/2$, the state given in Eq. \eqref{eq:noise1} starts violating the Bell inequality only when
\begin{eqnarray}
2r \geq \tanh^{-1}\frac{\sqrt{1-2(1-p)}}{1-p}.
\label{eq:critical_r_diag_noise}
\end{eqnarray}   
Note that if $p \geq 2-\sqrt{2}$, even the EPR state, i.e., the TMSV state with $r\rightarrow\infty$, does not violate a Bell inequality in this setting. So we get a criticality in the squeezing parameter, $r$, given by the above equation, when $1/2 \leq p \leq 2-\sqrt{2}$.

When we add or subtract photons to a single mode of the state given in Eq. \eqref{eq:noise1},  we have $\tilde{\mathcal{K}}_{(\pm k,0)} = (1-p)\mathcal{K}_{(k,0)} = (1-p)\mathcal{K}_{(-k,0)} = \tilde{\mathcal{K}}_{(k,0)} $. Now, if $p$ is known, the maximal violation of Bell inequality simply reads 
\begin{eqnarray}
\chi^{max}_{p} (\tilde{\rho}_{\pm k}) = 2\sqrt{1 + \tilde{\mathcal{K}}_{(k,0)}^2}.
\end{eqnarray}
From the above expression, it is clear that photon addition (subtraction) always leads to an overall enhancement in the violation of Bell inequality, which will be dictated by the change in $\mathcal{K}_{(k, 0)}$ with respect to $k$.
In the absence of any knowledge about $p$, the violation is given by
\begin{eqnarray}
\chi^{max}_{\bcancel p} (\tilde{\rho}_{\pm k}) = 2\Big( \frac{1+\mathcal{K}_{(k,0)}\tilde{\mathcal{K}}_{(k,0)}}{\sqrt{1 + \mathcal{K}^2_{(k,0)}}}\Big).
\label{eq:noise3}
\end{eqnarray}
Again, upon addition (subtraction) of photons in one mode, the above expression can be increased. Specifically, for $1/2 \leq p \leq 2 - \sqrt{2}$ and $r < \tanh^{-1}\left(\sqrt{1 - 2(1-p)}/(1-p)\right)$, the violation can be \emph{activated} via photon addition or subtraction in a single mode.

Interestingly, note that, for any noise with the same structure ($\sum C_n |n,n \rangle \langle n,n|$) as given in Eq. \eqref{eq:noise1}, the Bell expressions for known or unknown values of $p$ do not depend of the values of $C_n$.

%
%Note, any noise with the same structure ($\sum_n f_n |n,n \rangle \langle n,n|$) as given in Eq. \ref{eq:noise1}, would produce the same effect in the violation of Bell inequality. This is due to the combination of both the structure of the measurements considered here for quantifying violation of Bell inequality and the symmetrical nature of the classically correlated noise model.

\subsection{Faulty Twin Beam Generator}
\label{sec:fault}
Upto now, we consider the scenario where the state is affected by noise. There  can be a situation where the twin beam generator is typically imperfect, and due to internal imperfection and losses, it may end up in generating TMSV states with less squeezing than it is ought to. 
%Here, we analyze the fault tolerance of the violation in such a scenario. 
To put things in a quantitative perspective, we assume that a twin beam generator which is labeled to produce states with squeezing $r$, does so with an unknown $r'$. Off course, $r' < r$. The correlators are calculated via measurements performed with pseudospin operators oriented in the optimal direction for the TMSV state with squeezing parameter $r$. The maximal violation obtained in such a situation, following Eqs. \eqref{eq:optimal_setting} and \eqref{eq:max_bv}, is given by
\begin{eqnarray}
\chi^{max}_{r} (\ket{\psi^{r'}}) &=& 2 \times (\cos \theta +\tanh 2r' \sin \theta), \nonumber \\
&=& 2 \times \frac{1+\tanh 2r'\tanh 2r}{\sqrt{1 + \tanh^2 2r}}
\label{eq:fault1}
\end{eqnarray}
We have $\chi^{max}_{r} (\ket{\psi^{r'}}) \leq 2$ when
\begin{eqnarray}
\tanh 2r' \leq \frac{1}{\tanh 2r} \times \big(\sqrt{1 + \tanh^2 2r}-1 \big).
\label{eq:fault_r_critical}
\end{eqnarray}
The equality holds when $r' = r_c$, where $r_c$ is the critical value of $r'$ for a given $r$, below which the state fails to show any violation. For the EPR state, the critical value of $r'$ saturates to a finite value $r_c^{\infty} = \frac{1}{2} \times \arctanh (\sqrt{2}-1) \approx 0.22$. Although $r_c$  might seem to be a small value even for the EPR state, for experimentally relevant squeezing parameters, $r_c$ is comparable to $r$. For example, $r_c$ for  $r=0.75$ approximately reads $0.203172$. We want to analyze the effects of adding or subtracting photons from a single mode when $r'$ falls below $r_c$, i.e., when the state does not violate Bell inequality based on pseudospin operators.

Let us check whether the range of squeezing parameter which shows non violation can be changed if one adds (subtracts) photons even in a single mode. In this case, for single mode operations, photon addition remains equivalent to subtraction,  since the fault in the generator just reduces the squeezing parameter of the TMSV state, and thus equivalence argument  goes through.  So we add $k$ photons to the first mode of the obtained squeezed state with squeezing $r'$. As before, we use the optimal measurement settings for $k$ photon-added TMSV state with squeezing parameter $r$. Note that $r$ is the labeled value of squeezing that the twin beam generator is intended to produce. The maximal violation, so obtained in this scenario, is given by 
\begin{eqnarray}
\chi^{max}_r (\ket{\psi_{r'}^{(\pm k,0)}}) = 2 \times \frac{1+\mathcal{K}_{(k,0)}^{r'} \mathcal{K}_{(k,0)}}{\sqrt{1 + \mathcal{K}_{(k,0)}^2}},  
\end{eqnarray}
where $\mathcal{K}_{(k,0)}$ is in Eq. \eqref{eq:K_single_mode}, and is calculated for the TMSV state with squeezing parameter $r$, while $\mathcal{K}_{(k,0)}^{r'}$ represents the same thing for the state $|\psi^{(k,0)}_{r'}\rangle$. We consider some representative states for which $r'$ falls below the critical value as given in Eq. \eqref{eq:fault_r_critical}.
We show that it is possible to \emph{activate} the violation for these states by using single mode operations. See Fig. \ref{fig:activation_faulty} (a).

\begin{figure}[ht]
\includegraphics[width=\linewidth]{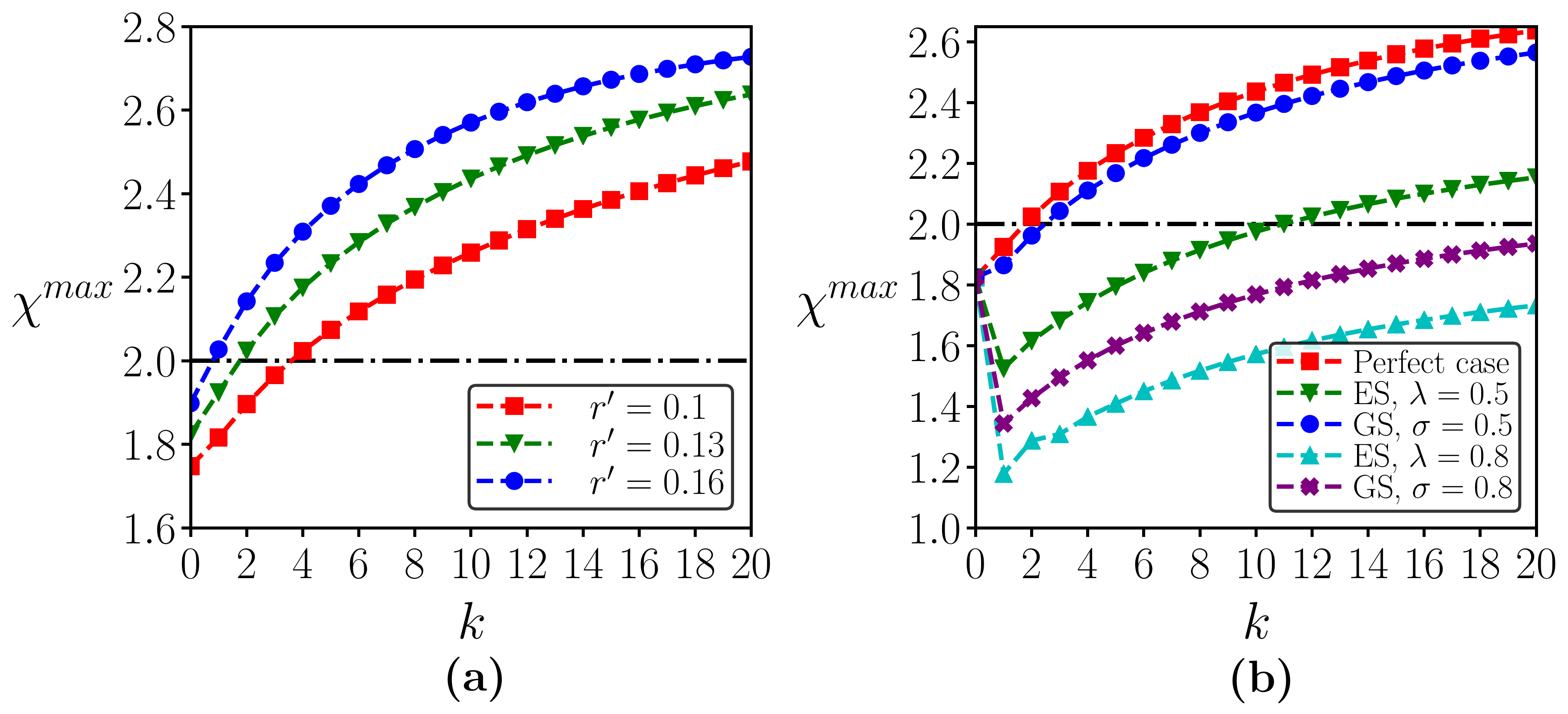}
\caption{Activation of the violation of pseudospin operator-based Bell inequality. We choose $r=0.75$, such that the corresponding critical value of $r'$ is $r_c = 0.203172$. (a) Activation via deterministic addition (subtraction) of photons in single mode for three values of $r' < r_c$. (b) The dual effects of imperfect addition (subtraction) of photons in single mode and faulty TMSV state for $r' = 0.13$, and for different values of $\lambda$ and $\sigma$ (see discussions in Sec. \ref{sec:dark}). Both the axes are dimensionless.}
\label{fig:activation_faulty}
\end{figure}

In the entire analysis, we have assumed that the addition or subtraction process is error free. In the next subsection, the same situations will be re-examined, when the photon addition or subtraction process is itself imperfect.

%\subsection{Activation of Bell violation}
%In the previous sections, we computed regions in the parameter space for which the system ceases to be Bell violating. In this section we investigate the possibility of activating Bell violation for these regions of parameter space by adding/subtracting photons to the bare TMSV state.
%
%\subsubsection{Activation in the noisy case}
%Can we activate Bell violation of noisy TMSV states by adding or subtracting photons from a particular mode. See Fig. \ref{fig:activation_noise}.
%
%\begin{figure}[h]
%\includegraphics[width=\linewidth]{noise1}
%\caption{Activation of Bell violation for $r=0.1$ when $p=0.556932$ with the corresponding critical value $r_c = 0.5$ on single mode addition/subtraction of photons.}
%\label{fig:activation_noise}
%\end{figure}

\subsection{Imperfections in photon addition and subtraction mechanism}
\label{sec:dark}
%\textcolor{red}{We can also study fault tolerance in number of photons added/subtracted. We cannot do that analytically but can give some numerically generated plots. Moreover, due to the structure of the generalized pseudo spin operators, the even photon added states would be more robust to incolclusivity. This further highlights the even odd dichotomy.}

The indeterminacy in addition (subtraction) of photons can be attributed due to variety of reasons, like dark counts \cite{noise_dark,nd2} of the detector etc., and hence can lead to decrement in violation of local realism. In this subsection,  we consider two distinct models of imperfections in the added or subtracted number of photons. Firstly, for a given $k$ number of added (subtracted) photons, we assume that the state is to be mixed with  $k-1, k-2,...k-m$ ($m\leq k$) number of photon-added (-subtracted) states with  probabilities which follows the exponential suppression (ES). Hence such that the effective  state becomes
\begin{eqnarray}
\bar{\rho}_{\pm k} = \sum_{i = 0}^m p_i  \rho_{\pm |k-i|}.
\end{eqnarray} 
Here, $m$ is the cutoff on the maximal discrepancy in the photon number during the addition (subtraction) procedure, $\rho_{\pm l}$ represents a state with $l$ number of added (subtracted) photons, and
$p_i$s are the mixing probabilities, which decrease according to exponential law. Second scenario considered in this paper where
 the probabilities are Gaussian. 
Specifically, the exponential probabilities, for a given $m$, are given by
\begin{eqnarray}
p_i = \frac{e^{-i/\lambda}}{\sum_{i=0}^m e^{-i/\lambda}},
\end{eqnarray}
whereas  for Gaussian suppression (GS), the probabilities takes the form as 
\begin{eqnarray}
p_i = \frac{e^{-i^2/\sigma^2}}{\sum_{i=0}^{m} e^{-i^2/\sigma^2}}.
\end{eqnarray}
Here, $\lambda$ and $\sigma$ give the measures of dispersion for these imperfect additions (subtractions).  

%For a given $k$, the pseudospin operator configuration (q - values) employed is given by $(q_1,q_2)=(k \mbox{ mod } 2,0)$. So, in this measurement setting, states with $k-1,k-3,....$ added photons do not violate Bell inequality. 

\subsubsection{Noisy states}
In Sec. \ref{sec:noise}, we have discussed the effects of noise on the violation of Bell inequality for the TMSV states, and the role of photon addition and subtraction to improve the situation. Specifically, we have discussed the cases of local noises (thermal and Gaussian), and a classically correlated noise. In this subsection, we  study the effects of faulty addition (subtraction) of photons on the Bell expression, when the noise probability, $p$, is known. 

When the photon addition scheme on TMSV states with local noise suffers exponential suppression, and when we know the value of $p$, the violation of Bell inequality is given by  
\begin{eqnarray}
\chi_{ES}^{max} &=& \frac{2(\sum_{n=0}^{m} e^{-n/\lambda})^{-1}}{\sqrt{A_{k}^2 + B_k^2}}  \Big [ \sum_{i=0,2,4,... \leq m} e^{-i/\lambda}\big(A_{k}A_{k-i} \nonumber \\ &+& B_{k}B_{k-i}\big)  
-  \sum_{j=1,3,5,... \leq m} e^{-j/\lambda} A_{k}A_{k-j} \Big ]. 
\label{eq:noise_acti_exp}
\end{eqnarray}  
The corresponding violation for GS reads as
\begin{eqnarray}
\chi_{GS}^{max} &=& \frac{2(\sum_{n=0}^{m} e^{-n^2/\sigma^2})^{-1}}{\sqrt{A_{k}^2 + B_k^2}}  \Big [ \sum_{i=0,2,4,... \leq m} e^{-i^2/\sigma^2}\big(A_{k}A_{k-i} \nonumber \\ &+& B_{k}B_{k-i}\big)  
-  \sum_{j=1,3,5,... \leq m} e^{-j^2/\sigma^2} A_{k}A_{k-j} \Big ]. 
\label{eq:noise_acti_g}
\end{eqnarray}  
Here, $A_k$ and $B_k$  are given in Eqs. \eqref{eq:noise_A,B_add_k}. In this imperfect addition scenario, for low values of noise parameters, the enhancement in the maximal violation persists, with lower values compared to the perfect addition scenario (see Fig. \ref{fig:noise_ther_add}). The domain of activation also naturally shrinks in this imperfect case.
 \begin{figure}[h]
\includegraphics[width=\linewidth]{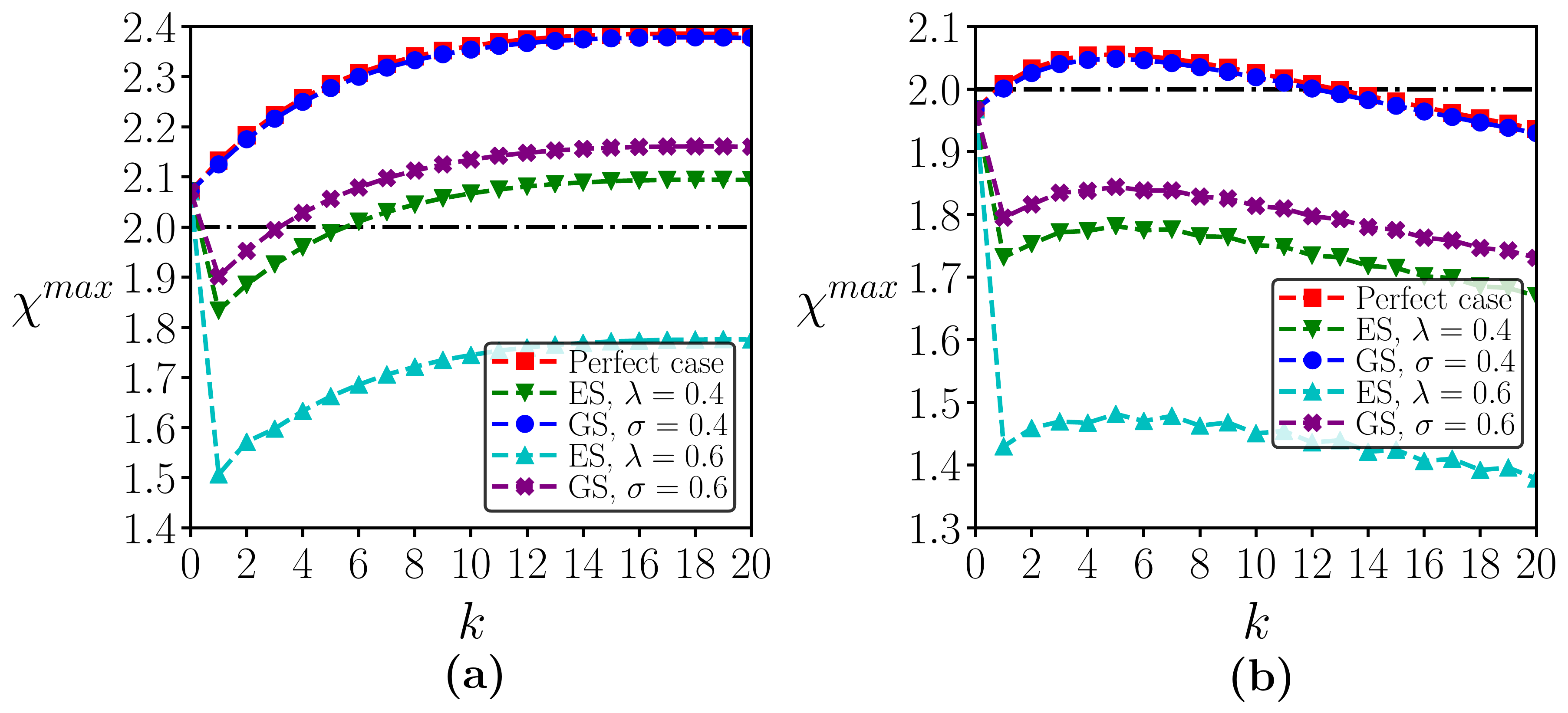}
\caption{Effects of imperfect photon addition process on TMSV states with local thermal noise, (a) with $(r,p,\beta_1,\beta_2)=(0.2,0.15,3,5)$, and (b) with $(r,p,\beta_1,\beta_2)=(0.3,0.5,5,3)$.  See Fig. \ref{fig:noise_ther_add} for the perfect photon addition case for these choices of system-noise parameters. We chose these two values to highlight the consequences of imperfect photon addition on situations of enhancement (a) and activation (b) of $\chi^{max}$. Both the axes are dimensionless.}
\label{fig:activation_noisy}
\end{figure}

\subsubsection{Faulty twin beam generator}
\label{sec:faulty_faulty}

We now study the response on violation of Bell inequality under coupled imperfect scenario. In particular, along with imperfect photon addition, the twin beam generator produce the TMSV state with $r'$, instead of $r$. In case of ES,
\begin{eqnarray}
\chi_{ES}^{max} &=& \frac{2(\sum_{n=0}^{m} e^{-n/\lambda})^{-1}}{\sqrt{1 + \mathcal{K}_{(k,0)}^2}}   \Big [ \sum_{i=0,2,4,... \leq m} e^{-i/\lambda}\big(1 \nonumber \\ 
 &+&\mathcal{K}_{(k-i,0)}^{r'} \mathcal{K}_{(k,0)}\big)  
 -  \sum_{j=1,3,5,... \leq m} e^{-j/\lambda} \Big], 
\label{eq:fault_acti_exp}
\end{eqnarray}  
while for GS, we get
\begin{eqnarray}
\chi_{GS}^{max} &=& \frac{2(\sum_{n=0}^{m} e^{-n^2/\sigma^2})^{-1}}{\sqrt{1 + \mathcal{K}_{(k,0)}^2}}   \Big [ \sum_{i=0,2,4,... \leq m} e^{-i^2/\sigma^2}\big(1 \nonumber \\ 
 &+&\mathcal{K}_{(k-i,0)}^{r'} \mathcal{K}_{(k,0)}\big)  
 -  \sum_{j=1,3,5,... \leq m} e^{-j^2/\sigma^2} \Big]. 
\label{eq:fault_acti_gauss}
\end{eqnarray}  
To take one concrete example, we restrict $m$ to be equal to $k$, and choose different values of $\lambda$ and $\sigma$, and examine the consequence of faulty photon addition procedure on the Bell expression (see Fig. \ref{fig:activation_faulty} (b)). 
We observe that for low values of dispersions ($\lambda$ and $\sigma$), the Bell expression, which initially does not violate, 
always increase with varying number of added photons, $k$, leading to activation of the violation. However, if the dispersions are large, in both exponential and Gaussian cases,
the Bell expression initially shows a decrement in its value, and can finally be enhanced or activated after adding sufficiently high number of photons. There can also exist scenarios, where this activation is not possible at all, even after adding a large number of photons (see Fig. \ref{fig:activation_faulty} (b)).

%\noindent \emph{Criticality in the value of r.} For $1/2 \leq p \leq 2-\sqrt{2}$, the state given in Eq. \ref{eq:noise1} becomes Bell violating for 
%\begin{eqnarray}
%2r \geq \tanh^{-1}\frac{\sqrt{1-2(1-p)}}{1-p}.
%\end{eqnarray}   

\section{Conclusion} \label{sec:conclu}
Violation of Bell inequalities by quantum systems establishes the existence of correlations beyond the classical ones. For 
finite-dimensional quantum systems, violation of Bell inequalities  have been studied more thoroughly in comparison to the same for  continuous-variable (infinite dimensional) systems.
 In the field of continuous-variable systems, there was a long outstanding debate, started by John Bell, as to whether states with positive Wigner functions would violate a Bell inequality. It was resolved conclusively by constructing Bell expressions using parity operators, later using pseudospin operators, and demonstrating violation for certain entangled states with positive Wigner function. In this paper, we used pseudospin operators to examine the violation of Bell inequality for photon-added and -subtracted two-mode squeezed vacuum (TMSV) states, where addition as well as subtraction is performed either in a single mode or in both the modes. We found that unlike entanglement, the amount of violation of pseudospin operator-based Bell inequality by photon-subtracted state can be higher than that of the photon-added ones. 
 %Other contrasting features between entanglement and Bell inequality are also discussed.
  
%The effects on Bell violation on distributing a given number of photons into the two modes was also investigated. We contrasted the response of violation Bell inequality with entanglement upon photon addition and subtraction on the TMSV states.

We have further studied the effects of local noise (specifically, local thermal and local Gaussian noise) on the maximal violation of Bell inequality for the TMSV states, and computed the parameter ranges for which the noisy TMSV state abstains from violating the considered Bell inequality. We demonstrated that under such circumstances, single-mode operations like photon addition, can \emph{activate}  violation. We repeated the same drill of investigations with the goal of activation in the case of a faulty 
twin-beam generator for generating TMSV states, and imperfections in photon addition or subtraction process. We reported here that in both the scenarios, the answer is affirmative, i.e., the activation is possible, thereby transforming non-violating states to violating ones.

%\section{New ideas}
%\textcolor{red}{A. uncertainity in the knowledge of squeezing parameter may lead to a critical value of r. This can be a more realistic scenario. we can cite the 2016 pra.}
%
%\textcolor{blue}{B. We can give the energy argument.}
%
%C. Gain in BV. To study how much enhancement we obtain by photon addition/subtraction process wrt to initial BV. Let $|\psi\rangle$ and $|\phi\rangle$ be the TMSV state and photon added (or subtracted) TMSV state respectively. We compute the the gain ($\mathcal{G}$) in BV as given below.
%\begin{eqnarray}
%\mathcal{G}=\frac{\mathcal{B}(|\phi\rangle)-\mathcal{B}(|\psi\rangle)}{\mathcal{B}(|\psi\rangle)}.
%\end{eqnarray}
%where $\mathcal{B}(|\chi\rangle)$ is the maximal Bell violation obtained for $|\chi\rangle$.

\appendix

\section{Maximization of Bell expression}
%The proof of the maximisation is based on the method developed in \cite{horo_appendix}. 
Let us now discuss the method for obtaining $\chi^{max}$, by performing maximization over the settings, i.e., $\theta_a,\theta_b,\theta_a^{'},\theta_b^{'}$.
The correlation matrix, $T_{ij} = \langle S^i_{q_1} \otimes S^j_{q_2} \rangle$, where $i,j = x,y,z$ and $q_1$ and $q_2$ are chosen appropriately depending on the structure of the state. $S^x_q$ and $S^y_q$ are simply given by $S^+_q + S^-_q$ and $-i(S^+_q - S^-_q)$ respectively, where $S^+_q$ and $S^-_q$ are given in Eq. \eqref{eq:gen_pseudo_spin}. The $T$ (correlation) matrix for photon-added (subtracted) TMSV states
%, for which the correlation function has the same structure as in \eqref{eq:correlation_fn_structure}, 
can be expressed as, $T = \text{ diag}(\mathcal{K}, -\mathcal{K},1)$. The two highest eigenvalues of the matrix $T^\dagger T$ are $1$ and $\mathcal{K}^2$ respectively. Now, following the argument as given in \cite{horo_appendix}, the maximal violation of Bell inequality is given by $2 \times \sqrt{\mathcal M(T^\dagger T)}$, where $\mathcal M(T^\dagger T)$ represents the sum of the two largest eigenvalues of $T^\dagger T$. The same in this case reads
 $2\sqrt{1+\mathcal{K}^2}$. This completes the proof of the expressions in Eqs. \eqref{eq:max_bv} and \eqref{eq:noise_bv_p_known}.

For the two-qubit system, the conditions derived in \cite{horo_appendix} provides a necessary and sufficient condition for violations of Bell inequality. However, the conditions of violation of Bell inequality obtained in this manuscript using pseudospin operators are only sufficient but not necessary. This is so because for the two-qubit states, the Pauli spin operators form a basis for all operators in that space. Thus, the optimization involves the maximization of the entire spectrum of dichotomic operators in the two-qubit space. However, the pseudospin operators do not form a basis for all dichotomic operators in the space of two mode continuous variable states. Therefore, the maximization involved here, only gives the maximal violation of Bell inequality in the restricted subspace of pseudospin operators making the conditions of violation sufficient but not necessary.

\label{appendix:A}

\end{document}